\let\old@ssect\@ssect 
\def\endfigure{\end@float}
\def\endtable{\end@float}
\newtheoremstyle{boldtheorem}
  {\topsep}   
  {\topsep}   
  {\normalfont}
  {}          
  {\bfseries} 
  {.}         
  {.5em}      
  {}          
\newcommand*\circled[1]{\tikz[baseline=(char.base)]{
            \node[shape=circle,draw,inner sep=1.3pt] (char) {#1};}}
\theoremstyle{boldtheorem}
\newtheorem{thm}{Theorem}
\newtheorem{rem}{Remark}%
\newtheorem{cor}{Corollary}%
\newtheorem{assum}{Assumption}%
\begin{document}

\begin{frontmatter}




\title{Higher-Order Sinusoidal Input Describing Functions for Open-Loop and Closed-Loop Reset Control with Application to Mechatronics Systems}

\author[inst]{Xinxin Zhang}\ead{X.Zhang-15@tudelft.nl} 
\author[inst]{S. Hassan HosseinNia}\ead{S.H.HosseinNiaKani@tudelft.nl (The corresponding author). }
\affiliation[inst]{organization={Department of Precision and Microsystems Engineering (PME), Delft University of Technology},
            addressline={Mekelweg 2}, 
            city={Delft},
            postcode={2628CD}, 
            country={The Netherlands}}

\begin{abstract}
Reset control enhances the performance of high-precision mechatronics systems. This paper introduces a generalized reset feedback control structure that integrates a single reset-state reset controller, a shaping filter for tuning reset actions, and linear compensators arranged in series and parallel configurations with the reset controller. This structure offers greater tuning flexibility to optimize reset control performance. However, frequency-domain analysis for such systems remains underdeveloped. To address this gap, this study makes three key contributions: (1) developing Higher-Order Sinusoidal Input Describing Functions (HOSIDFs) for open-loop reset control systems; (2) deriving HOSIDFs for closed-loop reset control systems and establishing a connection with open-loop analysis; and (3) creating a MATLAB-based App to implement these methods, providing mechatronics engineers with a practical tool for reset control system design and analysis. The accuracy of the proposed methods is validated through simulations and experiments. Finally, the utility of the proposed methods is demonstrated through case studies that analyze and compare the performance of three controllers: a PID controller, a reset controller, and a shaped reset controller on a precision motion stage. Both analytical and experimental results demonstrate that the shaped reset controller provides higher tracking precision while reducing actuation forces, outperforming both the reset and PID controllers. These findings highlight the effectiveness of the proposed frequency-domain methods in analyzing and optimizing the performance of reset-controlled mechatronics systems.
\end{abstract}



\begin{keyword}
High-precision mechatronics systems \sep Reset feedback control system \sep Open-loop \sep Closed-loop \sep Higher-Order Sinusoidal Input Describing Functions (HOSIDFs)  \sep MATLAB App
\MSC 93C80 \sep 93C10 \sep 70Q05
\end{keyword}
\end{frontmatter}

\section{Introduction}
\label{sec: intro}
High-precision mechatronics industries require controllers capable of delivering high precision, speed, and robustness. Linear feedback controllers, particularly Proportional-Integral-Derivative (PID) controllers, are widely used in these applications due to their simplicity and effectiveness \cite{schmidt2020design}. In the PID controller, the integrator accumulates system error over time and adds it into the control signal to drive the error to zero in the long term. However, this cumulative action creates a memory effect, where even if the current error is zero or small, the integrator may still have a non-zero output due to past accumulated errors, potentially causing overshoot and stability issues. To address these challenges, the Clegg Integrator (CI) was introduced \cite{clegg1958nonlinear}, resetting the integrator state to zero whenever the error signal crosses zero. Sinusoidal-Input Describing Function (SIDF) analysis \cite{guo2009frequency} shows that the CI achieves a 51.9-degree phase lead compared to a linear integrator while maintaining the same gain characteristics. By addressing the phase-gain trade-off inherent to linear integrators, the CI enhances system performance \cite{banos2012reset}. Since then, various reset control elements have been developed, including the First-Order Reset Element (FORE), Second-Order Reset Element (SORE), Proportional-Integral (PI) + CI configurations, Hybrid Integrator-Gain systems (HIGs), and the Constant-in-Gain-Lead-in-Phase (CgLp) controller \cite{krishnan1974synthesis, horowitz1975non, banos2007definition, saikumar2019constant, van2020hybrid}. These reset controllers have been applied to improve steady-state and transient performance across diverse industries, including chemical process control, teleoperation, and mechatronics systems \cite{guo2009frequency, saikumar2019constant, banos2007design, villaverde2010reset, guo2010optimal, heertjes2016experimental, zhao2019overcoming, karbasizadeh2022continuous, banos2012reset}. This study focuses on the application of reset feedback control in high-precision mechatronics systems. 

To facilitate the practical implementation of reset control systems in the mechatronics industry, effective analysis tools are essential. Frequency response analysis is among the most commonly used techniques for this purpose in industrial applications \cite{lumkes2001control}. It evaluates a system’s steady-state response to sinusoidal inputs across varying frequencies, offering insights into phase and magnitude characteristics of linear time-invariant (LTI) systems. Frequency response analysis covers both open-loop and closed-loop analysis. By leveraging the connection between the open-loop and closed-loop analysis through loop-shaping techniques \cite{grassi2001integrated}, control engineers can design controllers in the open loop, ensuring that the system meets specified closed-loop performance requirements, such as reducing steady-state errors and improving transient response \cite{richter2011advanced}. Additionally, frequency response analysis allows engineers to predict closed-loop behavior without requiring precise parametric models of the plant. This characteristic is particularly beneficial when obtaining an accurate plant model is impractical.

For frequency response analysis of open-loop reset feedback control systems, Higher-Order Sinusoidal Input Describing Function (HOSIDF) methods, as detailed in \cite{nuij2006higher, saikumar2021loop, karbasizadeh2022band, van2024higher}, are employed. These HOSIDF analysis methods align with the SIDF analysis method when the high-order (beyond the first-order) harmonics are negligible \cite{guo2009frequency}. However, the accuracy of existing HOSIDF methods for open-loop reset control systems is constrained to configurations where the input signal and the reset-triggered signal for the reset controller are identical. Currently, no accurate open-loop HOSIDF analysis method is available for the more generalized reset control systems, as presented in this study.

For closed-loop reset feedback control systems, frequency response analysis is particularly challenging because high-order harmonics can generate additional harmonics through the feedback loop, complicating the system dynamics and violating the superposition. Research in \cite{saikumar2021loop} introduced the HOSIDF method for such systems, establishing a connection between open-loop and closed-loop analyses, but it neglected the effects of reset actions on high-order harmonics within the feedback loop, resulting in inaccuracies. To address this, our recent work \cite{ZHANG2024106063} proposed an improved HOSIDF method that corrects these inaccuracies. However, the approach remains limited to specific reset control structures.

Motivated by the limitations in open-loop and closed-loop frequency response analysis for reset feedback control systems, this study makes the following contributions:
\begin{itemize}
    \item First, this study introduces a generalized reset control structure that incorporates a single reset-state reset controller, along with a shaping filter to tune reset actions, and linear compensators positioned in series before and after, as well as in parallel with, the reset controller. This structure broadens the tuning capabilities of reset control. Then, building on prior work in limited reset configurations \cite{ZHANG2024106063}, two frequency response analysis tools are developed for this structure: (1) open-loop Higher-Order Sinusoidal Input Describing Functions (HOSIDFs) and (2) closed-loop HOSIDFs for systems under the two-reset conditions \cite{Xinxin_multiple_reset}. Furthermore, a frequency-domain link is established between open-loop and closed-loop analyses using HOSIDFs. The effectiveness of these methods is validated through simulations and experiments on a precision motion stage.
    \item Then, the open-loop and closed-loop HOSIDFs for reset control systems are integrated into a MATLAB App, offering control engineers a practical, user-friendly tool for reset control systems analysis and design. 
    \item Finally, case studies are presented to demonstrate the performance capabilities of the generalized reset control structure and the effectiveness of the proposed frequency response analysis methods. Using the HOSIDFs methods, the performance of three controllers—PID, CgLp \cite{saikumar2019constant}, and shaped CgLp—is analyzed. Frequency-domain analysis results reveal that both the CgLp and shaped CgLp controllers provide phase lead compared to the PID controller. Furthermore, the shaped CgLp controller effectively reduces high-order harmonics while retaining the benefits of the first-order harmonic compared to the CgLp controller. These frequency-domain enhancements enable the shaped CgLp controller to achieve the lowest steady-state error and actuation force among the three controllers, validated through precision motion stage experiments.
\end{itemize}

The remainder of the paper is organized as follows. Section \ref{sec:preliminaries} introduces the generalized reset feedback control structure and the experimental setup used in this study. Section \ref{sec: Toolbox 1} presents the HOSIDFs for open-loop reset control systems, followed by Section \ref{sec: Toolbox 3}, which details the HOSIDFs for closed-loop reset control systems. Section \ref{sec:matlab app} consolidates the methods from Sections \ref{sec: Toolbox 1} and \ref{sec: Toolbox 3} into a MATLAB App. Section \ref{sec: Application of Theorems} demonstrates the application of the proposed methods in analyzing the performance of reset controllers on a precision motion stage. Finally, Section \ref{sec: conclusion} presents concluding remarks and outlines future research directions.

\section{Preliminaries}
\label{sec:preliminaries}
This section begins by defining the generalized reset control system. Following this, the stability and convergence conditions for the reset control system are outlined. Finally, the experimental precision motion stage used in this work is introduced.
\subsection{A Generalized Reset Feedback Control System}
This study focuses on the frequency-domain analysis of a generalized reset feedback control system, whose block diagram is defined in Fig. \ref{fig1:RC system}. 
\begin{figure}[htp]
\centering
           \centering
	\includegraphics[width=0.75\columnwidth]{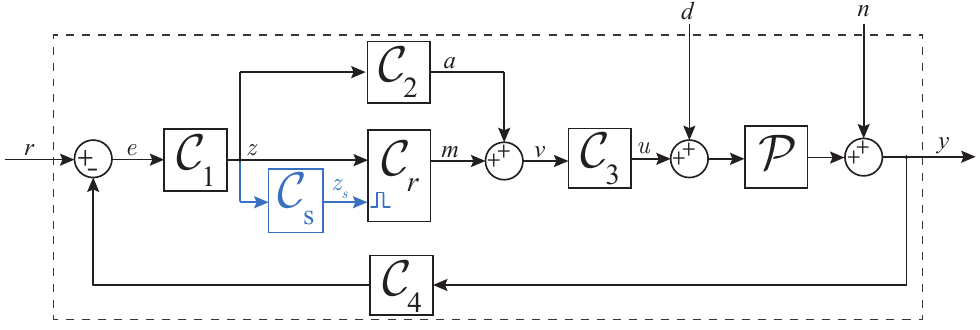}
	\caption{Block diagram of a generalized reset control system, with the resetting action denoted by blue lines.}
	\label{fig1:RC system}
\end{figure}

In this configuration, \(r\), \(d\), \(n\), \(e\), \(u\), and \(y\) represent the reference input, disturbance, noise, error, control input, and system output signals, respectively. The block \(\mathcal{C}_r\) represents the reset controller, while the LTI shaping filter \(\mathcal{C}_s\) generates the reset-triggered signal \(z_s\) to trigger the reset actions. Systems \(\mathcal{C}_1\), \(\mathcal{C}_2\), and \(\mathcal{C}_3\) are LTI controllers integrated into the feed-through loop, while the LTI controller \(\mathcal{C}_4\) is placed within the feedback loop. The plant is denoted by \(\mathcal{P}\). 

The reset controller \(\mathcal{C}_r\) is a hybrid system that combines a linear controller with a reset mechanism \cite{banos2012reset, guo2019stability}. The state-space representation of \(\mathcal{C}_r\), with state \(x_r(t) \in \mathbb{R}^{n_c\times 1}\), input \(z(t)\), and output \(m(t)\), is given by:
\begin{equation} 
	\label{eq: State-Space eq of RC} 
	\mathcal{C}_r = 
 \begin{cases}
        \dot{x}_r(t) = A_Rx_r(t) + B_Rz(t), &  t \notin J, \\
		x_r(t^+) = A_\rho x_r(t), &  t \in J, \\
		m(t) = C_Rx_r(t) + D_Rz(t),
  \end{cases}
\end{equation} 
where matrices \(A_R \in \mathbb{R}^{n_c \times n_c}\), \(B_R \in \mathbb{R}^{n_c \times 1}\), \(C_R \in \mathbb{R}^{1 \times n_c}\), and \(D_R \in \mathbb{R}^{1 \times 1}\) define the flow dynamics of the reset controller \(\mathcal{C}_r\), referred to as the Base-Linear Controller (BLC) \(\mathcal{C}_{l}\), and are represented by:
\begin{equation}
	\label{eq: RL}
	C_{l}(\omega) = C_R(j\omega I - A_R)^{-1}B_R + D_R,
\end{equation}
where $\omega\in\mathbb{R}^+$ represents the angular frequency in the frequency domain. Replacing \(\mathcal{C}_r\) with \(\mathcal{C}_l\) \eqref{eq: RL}, the system in Fig. \ref{fig1:RC system} is termed the Base-Linear System (BLS). 

The reset controller \(\mathcal{C}_r\) in \eqref{eq: State-Space eq of RC} employs the ``zero-crossing law" as its reset mechanism \cite{banos2012reset}, where the state \(x_r(t)\) is reset to \(x_r(t^+)\) whenever the reset trigger signal \(z_s(t)\) crosses zero. Therefore, the set of reset instants is defined as \(J = \{t_i \mid z_s(t_i) = 0, i \in \mathbb{N}\}\). At each reset instant \(t_i \in J\), the jump dynamics of \(\mathcal{C}_r\) are determined by the reset matrix \(A_\rho\), given by
\begin{equation}
\label{eq: A_rho}
    A_\rho = \begin{bmatrix}
		\gamma & \\
		& I_{n_c - 1}
	\end{bmatrix}, \quad \gamma \in (-1, 1].
\end{equation}
Equation \eqref{eq: A_rho} defines reset controllers with a single reset state. Common examples of such reset elements include the CI, the FORE, and the Second-Order Single State Reset Element (SOSRE) \cite{karbasizadeh2021fractional}. When $\gamma=1$ and thus \(A_\rho = I_{n_c}\) in \eqref{eq: A_rho}, the reset controller \(\mathcal{C}_r\) is identical to \(\mathcal{C}_l\) in \eqref{eq: RL}.

\subsection{Stability and Convergence Conditions for Reset Control Systems}
This paper works on the development of frequency response analysis methods for reset feedback control systems. Although stability and convergence conditions are not the primary focus of this paper, they are needed for frequency response analysis \cite{pavlov2006uniform, pavlov2007frequency}.

Following established literature, we introduce Assumptions \ref{assum: open-loop stability} and \ref{assum: closed-loop stability} to ensure the stability and convergence conditions for open-loop and closed-loop reset control systems, respectively.

The literature \cite{guo2009frequency} demonstrates that the reset controller defined in \eqref{eq: State-Space eq of RC}, when subjected to an input \( z(t) = |Z|\sin(\omega t + \angle Z) \), where \( |Z| \) and \( \angle Z \) denote the magnitude and phase of the signal \( z(t) \) respectively, exhibits a globally asymptotically stable \( 2\pi/\omega \)-periodic solution and converges globally if and only if: 
\begin{equation}
\label{eq:open-loop stability}
    |\lambda (A_\rho e^{A_R\delta})|<1,\ \forall \delta \in \mathbb{R}^+,
\end{equation} 
where $\lambda(\cdot)$ represents the eigenvalues of the matrix.

To ensure the the HOSIDF analysis for open-loop reset control systems, the following assumption is introduced:
\begin{assum}
\label{assum: open-loop stability}
The reset controller \(\mathcal{C}_r\) \eqref{eq: State-Space eq of RC}, with an input \( z(t) = |Z|\sin(\omega t + \angle Z) \), is assumed to satisfy the condition in \eqref{eq:open-loop stability}. Additionally, LTI systems $\mathcal{C}_1$, $\mathcal{C}_2$, $\mathcal{C}_3$, $\mathcal{C}_4$, and $\mathcal{C}_s$ are Hurwitz.
\end{assum}


To facilitate the HOSIDF analysis of the closed-loop reset control system, the following assumption is introduced, ensuring the uniform exponential convergence of the closed-loop reset control system as established in \cite{dastjerdi2022closed}:
\begin{assum}
\label{assum: closed-loop stability}
The initial condition of the reset controller \(\mathcal{C}_r\) \eqref{eq: State-Space eq of RC} is zero, there are infinitely many reset instants \( t_i \) with \( \lim\limits_{i \to \infty} t_i = \infty \), the input signals are Bohl functions \cite{barabanov2001bohl}, and the \( H_\beta \) condition detailed in \cite{beker2000forced, beker2004fundamental} is satisfied.
\end{assum}

Assumption \ref{assum: closed-loop stability} can be achieved through appropriate system design \cite{banos2012reset, saikumar2021loop}. When this assumption is satisfied, the closed-loop reset control system in Fig. \ref{fig1:RC system}, driven by a sinusoidal input with a frequency of $\omega$, attains a periodic steady-state response. This steady-state behavior can be described by the expression \( x(t) = \mathcal{S}(\sin(\omega t), \cos(\omega t), \omega) \), where \( \mathcal{S}: \mathbb{R}^3 \to \mathbb{R}^{n_{cl}} \), represents a function of the input signal and its frequency \cite{dastjerdi2022closed}, with \( n_{cl} \) denoting the number of states in closed-loop reset control systems.
\subsection{Precision Positioning Setup}
\label{sec:spider}
This study introduces frequency response analysis methods for generalized reset control systems in Fig. \ref{fig1:RC system}. Accurate frequency response analysis is essential for the effective design and analysis of reset control systems in precision motion control. For example, prior work in \cite{karbasizadeh2022continuous} developed a continuous CgLp element to suppress oscillations in precision motion systems, but this relied on parameter-specific and computationally intensive numerical methods due to the lack of closed-loop frequency response analysis techniques. The proposed HOSIDFs for open-loop and closed-loop reset control systems, along with their connection, provide magnitude and phase information across the entire frequency range, facilitating systematic optimization of reset-controlled mechatronics systems.

The experimental setup is a three-Degree-of-Freedom (3-DoF) precision positioning stage, depicted in Fig. \ref{fig1:Spyder}. The stage consists of three masses, \(M_1\), \(M_2\), and \(M_3\), which are connected to a central base mass \(M_c\) via dual leaf flexures. Each mass is actuated by its respective voice coil actuator, labeled \(A_1\), \(A_2\), and \(A_3\). Position feedback for the masses is obtained using Mercury M2000 linear encoders (denoted as ``Enc"), which offer a resolution of 100 nm and are sampled at a frequency of 10 kHz. Control systems are implemented on an NI CompactRIO platform, equipped with a linear current source power amplifier.
\begin{figure}[!t]
    \centering
	\centerline{\includegraphics[width=0.75\columnwidth]{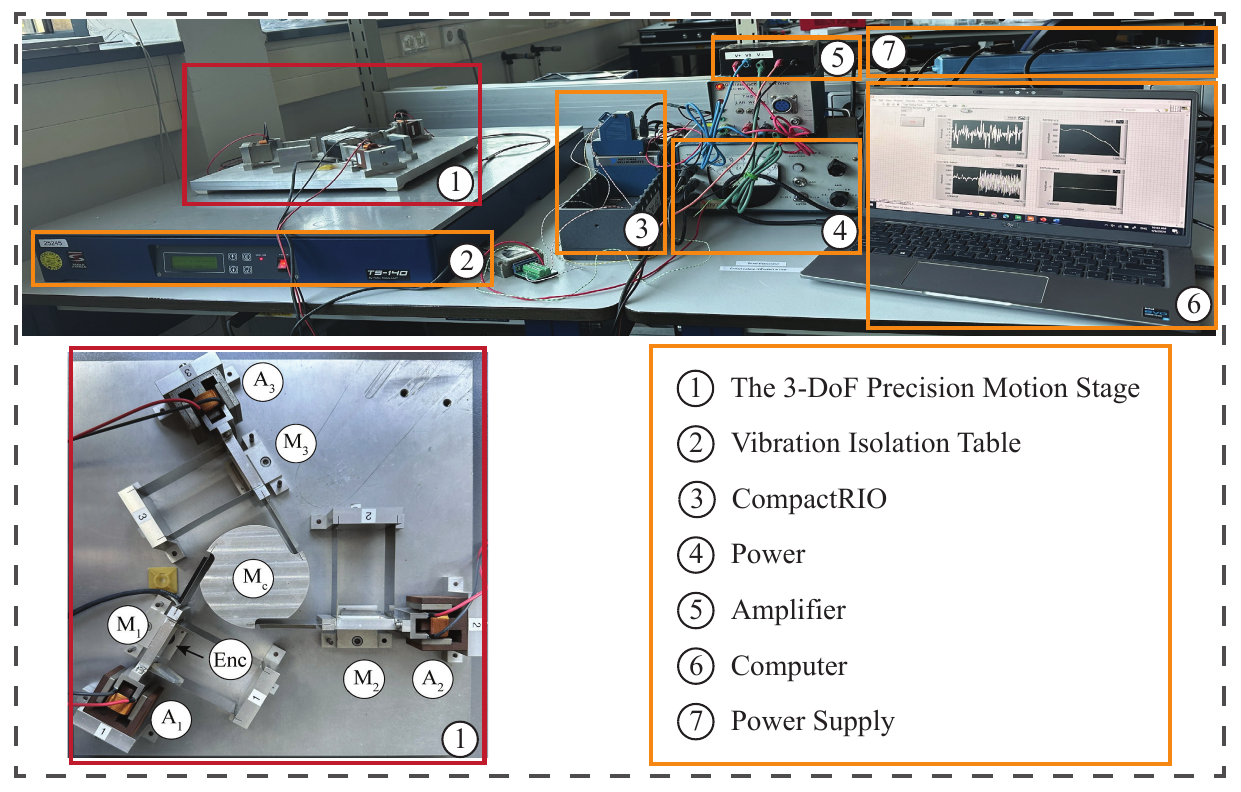}}
	\caption{The planar precision positioning stage.}
	\label{fig1:Spyder}
\end{figure}

In this study, only actuator \(A_1\) is employed to control the position of mass \(M_1\). Figure \ref{fig1:Spyder_frf} illustrates the measured Frequency Response Function (FRF) of the system, which closely resembles that of a linear LTI collocated double mass-spring-damper system, albeit with additional high-frequency parasitic dynamics. Utilizing the system identification toolbox in MATLAB, the system's main dynamics are modeled by the following transfer function \(\mathcal{P}(s)\), described as:
\begin{equation}
\label{eq:P(s)}
    \mathcal{P}(s) = \frac{6.615\times10^5}{83.57s^2+279.4s+5.837\times10^5}.
\end{equation}
This simplified model effectively captures the essential mass-spring-damper dynamics of the system.
\begin{figure}[!t]
    \centering
	\centerline{\includegraphics[width=0.75\columnwidth]{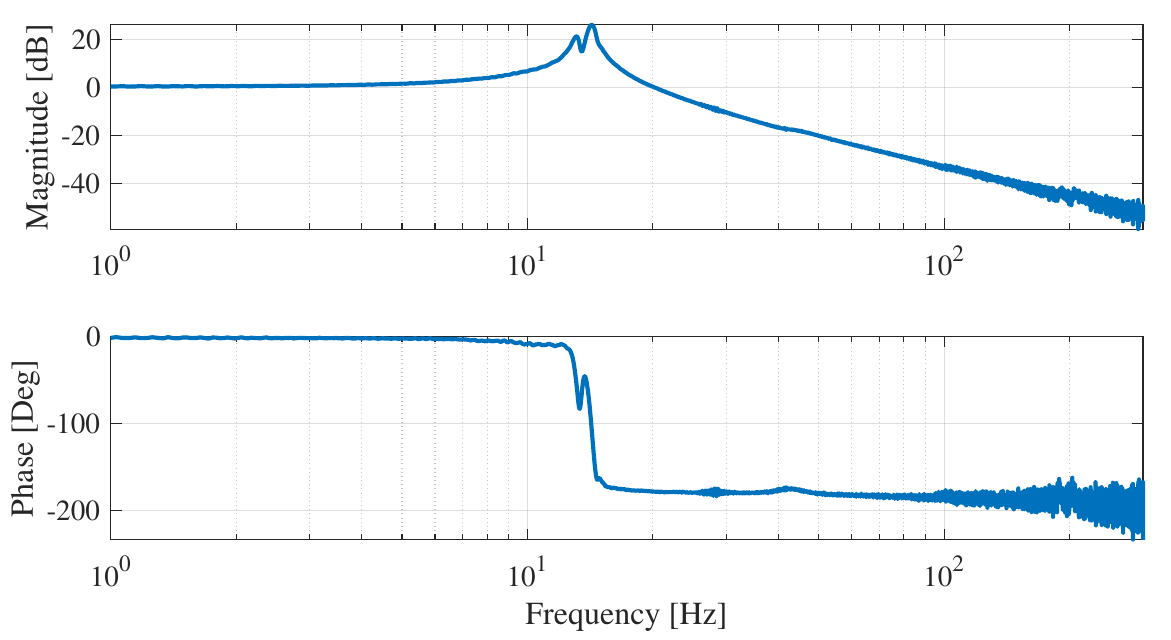}}
	\caption{The FRF data from actuator $A_1$ to attached mass $M_1$.}
	\label{fig1:Spyder_frf}
\end{figure}

\section{Main Result 1: Frequency Response Analysis Method and Validation for Open-Loop Reset Control Systems}
\label{sec: Toolbox 1}
Figure \ref{fig: open_loop_rcs} depicts the block diagram of the open-loop reset control system.
\begin{figure}[H]
    \centering
           \centering
	\includegraphics[width=0.75\columnwidth]{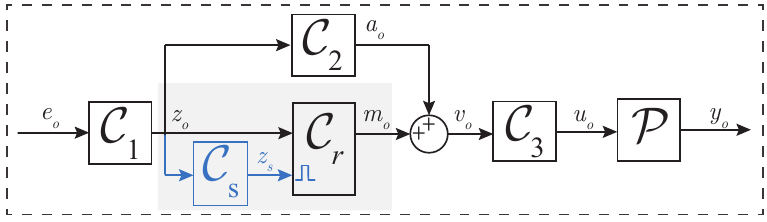}
	\caption{Block diagram of the open-loop reset control system.}
	\label{fig: open_loop_rcs}
\end{figure}

This section extends our previous work \cite{ZHANG2024106063} on frequency response analysis for open-loop reset control systems where \(\mathcal{C}_1 = \mathcal{C}_s = \mathcal{C}_3 = \mathcal{C}_4 = 1\) and \(\mathcal{C}_2 = 0\) to generalized reset control systems in Fig. \ref{fig: open_loop_rcs}. 
\subsection{HOSIDFs for the Open-Loop Reset Control Systems}

 The HOSIDFs analysis is a technique used for analyzing the frequency response of nonlinear systems \cite{nuij2006higher}. Theorem \ref{thm: open-loop HOSIDF} provides the HOSIDFs for the open-loop reset control system in Fig. \ref{fig: open_loop_rcs}.
\begin{thm}
\label{thm: open-loop HOSIDF}
Consider an open-loop reset control system as shown in Fig. \ref{fig: open_loop_rcs}, with an input signal \(e_o(t) = |E| \sin(\omega t + \angle E)\), resulting in the output signal \(y_o(t)\) under Assumption \ref{assum: open-loop stability}. Using the ``Virtual Harmonic Generator" \cite{nuij2006higher}, the input signal \(e_o(t)\) generates harmonics expressed as \(e_o^n(t) = |E| \sin(n\omega t + n \angle E)\), with the corresponding Fourier transform denoted as \(E_o^n(\omega)\). The signals \(z_o(t)\), \(m_o(t)\), and \(y_o(t)\) consist of \(n\) harmonics, represented as \(z_o^n(t)\), \(m_o^n(t)\), and \(y_o^n(t)\), with Fourier transforms \(Z_o^n(\omega)\), \(M_o^n(\omega)\), and \(Y_o^n(\omega)\), respectively. The Higher-Order Sinusoidal Input Describing Functions (HOSIDFs) of the reset controller \(\mathcal{C}_r\) are given by
\begin{equation}
\label{eq: Cr_hn_thm} 
\begin{aligned}
\mathcal{C}_r^n(\omega)&= \frac{M_o^n(\omega)}{Z_o^n(\omega)}= \begin{cases}
        \mathcal{C}_{l}(\omega) + \mathcal{C}_{\rho}^1(\omega), & \text{for }n=1, \\
	\mathcal{C}_{\rho}^n(\omega), & \text{for odd }n>1,\\
		0,&\text{for even }n \geqslant 2,
  \end{cases}    
\end{aligned}
\end{equation}
and the HOSIDFs of the open-loop reset control system are given by
\begin{equation}
\label{eq: H_hn} 
\begin{aligned}
&\mathcal{L}_n(\omega) =\frac{Y_o^n(\omega)}{E_o^n(\omega)}=
 \begin{cases}
\mathcal{C}_{1}(\omega)[\mathcal{C}_{l}(\omega) + \mathcal{C}_{\rho}^1(\omega)+ \mathcal{C}_2(\omega)]\mathcal{C}_{3}(\omega)\mathcal{P}(\omega), & \text{for }n=1, \\
\mathcal{C}_{1}(\omega)e^{j(n-1)\angle\mathcal{C}_{1}(\omega)}\mathcal{C}_{\rho}^n(\omega)\mathcal{C}_{3}(n\omega)\mathcal{P}(n\omega), & \text{for odd }n>1,\\
0,&\text{for even }n\geqslant 2,
\end{cases}
\end{aligned}
\end{equation}
where 
\begin{equation}
\label{eq: Delta_l, Delta_x, Delta_c, Delta_q, C_rho_n}
\begin{aligned}
\Delta_l(\omega) &= (j\omega I-A_R)^{-1}B_R,\\
\Delta_x(n\omega) &= C_R(jn\omega I-A_R)^{-1}jn\omega I,\\
\Delta_c(\omega) &= |\Delta_l(\omega)|\sin(\angle\Delta_l(\omega) -\angle \mathcal{C}_s(\omega)),\\
\mathcal{C}_{\rho}^n(\omega) &=  2\Delta_x(n\omega)\Delta_q(\omega) e^{jn\angle \mathcal{C}_s(\omega)}/(n\pi),\\
\Delta_q(\omega) &= 
(I+e^{A_R\pi/\omega})(A_\rho e^{A_R\pi/\omega}+I)^{-1}(A_\rho-I)\Delta_c(\omega). 
\end{aligned}
\end{equation}
\end{thm}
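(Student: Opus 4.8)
The plan is to reduce the loop-level statement to a describing-function computation for the reset controller $\mathcal{C}_r$ alone, and then to propagate the resulting harmonics through the surrounding LTI blocks. Since Fig.~\ref{fig: open_loop_rcs} contains no feedback path around $\mathcal{C}_r$, the harmonics created by the reset action travel forward only: the block $\mathcal{C}_1$ acts on the fundamental before the reset element, while $\mathcal{C}_3$ and $\mathcal{P}$ act on each harmonic afterwards. Concretely, I would first establish \eqref{eq: Cr_hn_thm} and then obtain \eqref{eq: H_hn} by writing the reset-controller input as $z_o=\mathcal{C}_1(\omega)e_o$, noting that the parallel branch $\mathcal{C}_2$ is LTI and therefore contributes only to $n=1$, and tracking the phase: the virtual $n$-th harmonic carries $n\angle\mathcal{C}_1(\omega)$, which combined with the magnitude $|\mathcal{C}_1(\omega)|$ gives exactly the factor $\mathcal{C}_1(\omega)e^{j(n-1)\angle\mathcal{C}_1(\omega)}=|\mathcal{C}_1(\omega)|e^{jn\angle\mathcal{C}_1(\omega)}$, after which $\mathcal{C}_3(n\omega)\mathcal{P}(n\omega)$ act on the $n\omega$ component.

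For \eqref{eq: Cr_hn_thm} I would feed $\mathcal{C}_r$ the sinusoid $z(t)=|Z|\sin(\omega t+\angle Z)$ and construct its steady-state periodic response. Because $z_s$ is produced from $z$ by the LTI filter $\mathcal{C}_s$, it is itself sinusoidal at frequency $\omega$, so its zero crossings are spaced exactly $\pi/\omega$ apart and there are two resets per period; Assumption~\ref{assum: open-loop stability} via \eqref{eq:open-loop stability} guarantees a unique globally attractive $2\pi/\omega$-periodic solution. I would split the state as $x_r=x_p+\tilde{x}$, where $x_p$ is the base-linear forced response (continuous, with $x_p(t)=\mathrm{Im}\{|Z|\Delta_l(\omega)e^{j(\omega t+\angle Z)}\}$) and $\tilde{x}$ is the reset-induced deviation, which flows as $\dot{\tilde{x}}=A_R\tilde{x}$ between resets and jumps as $\tilde{x}(t_i^+)=A_\rho\tilde{x}(t_i^-)+(A_\rho-I)x_p(t_i)$.

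The core step is solving the steady-state condition. Using the half-wave symmetry $x_p(t_{i+1})=-x_p(t_i)$ together with the anti-periodicity $\tilde{x}(t_{i+1}^+)=-\tilde{x}(t_i^+)$ and the flow map $\tilde{x}(t_{i+1}^-)=e^{A_R\pi/\omega}\tilde{x}(t_i^+)$, the jump relation closes to $\tilde{x}(t_i^+)=(I+A_\rho e^{A_R\pi/\omega})^{-1}(A_\rho-I)x_p(t_i)$. Evaluating $x_p$ at the reset instant, i.e. at the zero crossing of $z_s$, which is phase-shifted from $z$ by $\angle\mathcal{C}_s(\omega)$, gives $x_p(t_i)=|Z|\Delta_c(\omega)$ with $\Delta_c$ as in \eqref{eq: Delta_l, Delta_x, Delta_c, Delta_q, C_rho_n}, so that $\tilde{x}(t_i^+)/|Z|$ reproduces the matrix chain in $\Delta_q(\omega)$. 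I would then Fourier-decompose $m(t)=C_Rx_r(t)+D_Rz(t)$: the $x_p$-part returns $\mathcal{C}_l(\omega)$ at $n=1$, and the $n$-th harmonic of the anti-periodic, piecewise-exponential signal $C_R\tilde{x}(t)$ follows from $\int_0^{\pi/\omega}e^{(A_R-jn\omega I)\tau}\,d\tau=(jn\omega I-A_R)^{-1}\big((-1)^n e^{A_R\pi/\omega}-I\big)$. The prefactor $1-(-1)^n$ produced by the two opposite-signed half-periods kills all even harmonics and yields the factor $2$ for odd $n$; collecting the normalization, the reset-instant phase $e^{jn\angle\mathcal{C}_s(\omega)}$, and the $jn\omega$ that repackages the harmonic index into $\Delta_x(n\omega)$ reproduces $\mathcal{C}_\rho^n$ exactly.

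The main obstacle I expect is the steady-state bookkeeping rather than any single integral: obtaining the resolvent $(I+A_\rho e^{A_R\pi/\omega})^{-1}$ requires correctly imposing the half-wave (anti-periodic) symmetry, and the final matching requires reconciling the sine-phasor convention (a stray factor $j$), the reset-instant phase $\angle\mathcal{C}_s$, and the harmonic-index factor hidden inside $\Delta_x$, so that the compact form in \eqref{eq: Delta_l, Delta_x, Delta_c, Delta_q, C_rho_n} emerges. I would also verify that the constructed half-wave-symmetric orbit is indeed the one singled out by \eqref{eq:open-loop stability}, which is what licenses treating $\mathcal{C}_r$ as a genuine harmonic map and propagating its harmonics forward through $\mathcal{C}_3(n\omega)\mathcal{P}(n\omega)$ without regeneration in the open loop.
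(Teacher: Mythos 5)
Your proposal is correct and, at the loop level, follows exactly the paper's route: apply the Virtual Harmonic Generator to $z_o=\mathcal{C}_1(\omega)e_o$, observe that the parallel LTI branch $\mathcal{C}_2$ only feeds the fundamental, and push each harmonic of the reset controller's output through $\mathcal{C}_3(n\omega)\mathcal{P}(n\omega)$, with the phase bookkeeping $\mathcal{C}_1(\omega)e^{j(n-1)\angle\mathcal{C}_1(\omega)}=|\mathcal{C}_1(\omega)|e^{jn\angle\mathcal{C}_1(\omega)}$ handled the same way. Where you differ is at the controller level: the paper's proof does not derive \eqref{eq: Cr_hn_thm} at all --- it imports the decomposition $M_o(\omega)=M_l(\omega)+M_\rho(\omega)$ and the expressions for $\Delta_x$, $\Delta_q$, $Q^n$ wholesale from \cite{ZHANG2024106063} --- whereas you reconstruct that formula from first principles via the splitting $x_r=x_p+\tilde{x}$, the anti-periodic closure $\tilde{x}(t_i^+)=(I+A_\rho e^{A_R\pi/\omega})^{-1}(A_\rho-I)x_p(t_i)$, and the half-period integral $\int_0^{\pi/\omega}e^{(A_R-jn\omega I)\tau}d\tau=(jn\omega I-A_R)^{-1}(I+e^{A_R\pi/\omega})$ for odd $n$, which correctly produces both resolvent factors appearing in $\Delta_q(\omega)$, the vanishing of even harmonics, the evaluation $x_p(t_i)=|Z|\Delta_c(\omega)$ at the $\mathcal{C}_s$-shifted zero crossing, and (once the sine-phasor factor $2j$ and the $jn\omega$ inside $\Delta_x$ are reconciled) the normalization $2/(n\pi)$ and the phase $e^{jn\angle\mathcal{C}_s(\omega)}$. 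This buys self-containedness and makes explicit why two resets per period and half-wave symmetry hold in open loop (the forcing is a pure sinusoid, so $z_s$ is too), at the cost of having to justify that the unique periodic orbit guaranteed by Assumption \ref{assum: open-loop stability} inherits the anti-periodic symmetry --- a point you rightly flag; it follows from uniqueness, since $-x(t+\pi/\omega)$ is also a periodic solution of the same reset system. No gap.
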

\begin{proof}
The proof is provided in \ref{Proof: thmCv}.
\end{proof}

Based on Theorem \ref{thm: open-loop HOSIDF} and its proof in \ref{Proof: thmCv}, Fig. \ref{fig1:OL_Block_Diagram} illustrates the block diagram of the open-loop reset control system for HOSIDF analysis. Following this, Remark \ref{rem: yo(t)} provides the calculation for the output \( y_o(t) \) of the sinusoidal-input open-loop reset control system.
\begin{figure}[!t]
    \centering
	\centerline{\includegraphics[width=0.75\columnwidth]{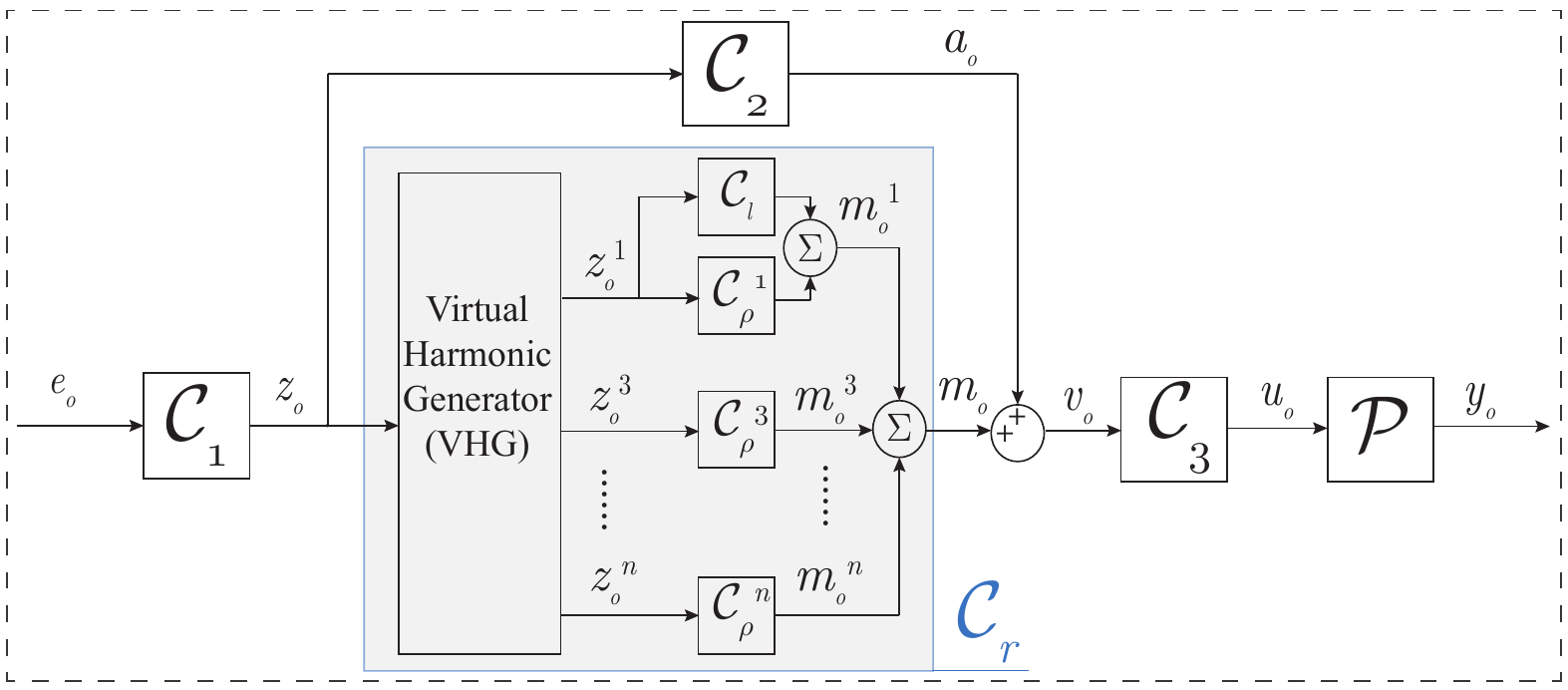}}
	\caption{Block diagram of the open-loop reset control system for HOSIDF analysis.}
	\label{fig1:OL_Block_Diagram}
\end{figure} 
\begin{rem}
    \label{rem: yo(t)}
Consider an open-loop reset control system with the input signal \(e_o(t) = |E|\sin(\omega t+ \angle E)\), under Assumption \ref{assum: open-loop stability}. The steady-state output signal \(y_o(t)\) is given by
\begin{equation}
\label{cor_eq_y_o(t)}
\begin{aligned}
y_o(t) = \sum\nolimits_{n=1}^{\infty} y_o^n(t) = \sum\nolimits_{n=1}^{\infty} |E\mathcal{L}_n(\omega)|&\sin(n\omega t+ n\angle E + \angle\mathcal{L}_n(\omega)),
n=2k+1 (k\in\mathbb{N}).    
\end{aligned}
\end{equation}
\end{rem} 
\subsection{Validation of the Open-Loop HOSIDFs}

This section uses an illustrative system to verify the accuracy of the open-loop HOSIDF, \(\mathcal{L}_n(\omega)\), derived in \eqref{eq: H_hn}. The illustrative system is based on the structure shown in Fig. \ref{fig: open_loop_rcs}, with the following design parameters: the reset controller \(\mathcal{C}_r\) is based on a BLC \(\mathcal{C}_l = 30\pi/s\) with a reset value \(\gamma = 0\), \(\mathcal{C}_1 = (s/(150\pi))/(s/(3000\pi)+1)\), \(\mathcal{C}_2 = \mathcal{C}_4 = 1\), \(\mathcal{C}_s = 1/(s/5+1)\), and \(\mathcal{C}_3 = 1/(s/(150\pi)+1)\). The plant $\mathcal{P}$ is given in \eqref{eq:P(s)}.

The input to the system is a sinusoidal signal \(e_o(t) = \sin(8\pi t)\). Figure \ref{fig:ex4_yo}(a) illustrates the output signal \(y_o(t) = \sum_{n=1}^{399} y_o^n(t)\) along with its first five harmonic components \(y_o^n(t)\) (\(n=1, 3, 5, 7, 9\)), computed using Theorem \ref{thm: open-loop HOSIDF} and Remark \ref{rem: yo(t)}. Moreover, Figure \ref{fig:ex4_yo}(b) compares the output signal \(y_o(t)\) obtained from simulation with the prediction generated by the HOSIDFs analysis method. The close agreement between the simulated and predicted results demonstrates the accuracy of the proposed HOSIDF analysis method for predicting the behavior of open-loop reset control systems.
\begin{figure}[!t]
    \centering
    \includegraphics[width=0.75\columnwidth]{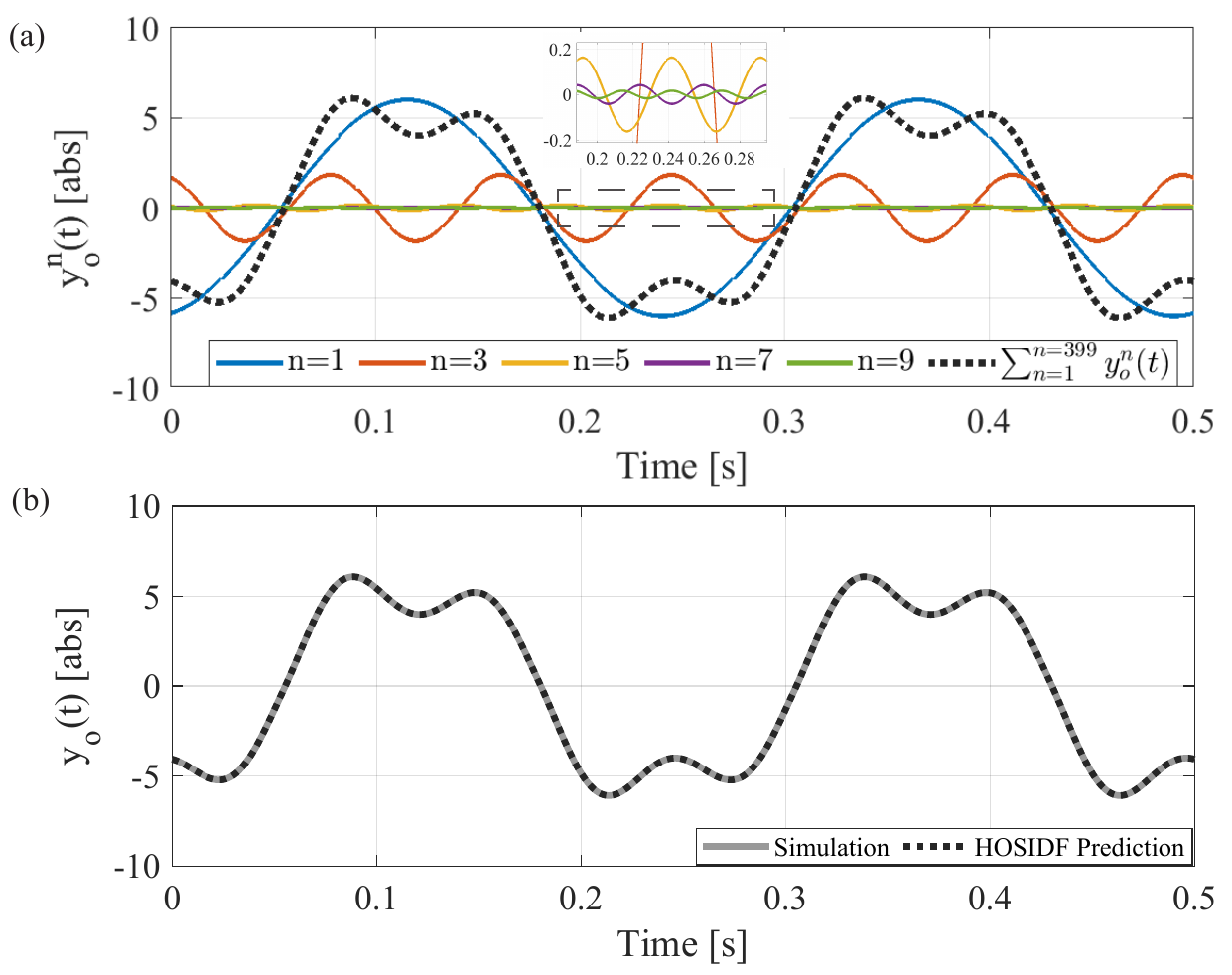}
    \caption{(a) The output signal \(y_o(t) = \sum_{n=1}^{399} y_o^n(t)\) and its first five harmonics \(y_o^n(t)\) (for \(n=1,3,5,7,9\)) for the illustrative open-loop reset control system under a sinusoidal input \( e_o(t) = \sin(8\pi t) \), obtained based on Theorem \ref{thm: open-loop HOSIDF}. (b) Simulated, previous prediction \cite{karbasizadeh2022band}, and Theorem \ref{thm: open-loop HOSIDF}-predicted output signal $y_o(t)$.}
	\label{fig:ex4_yo}
\end{figure}

The accuracy of the HOSIDFs analysis method in Theorem \ref{thm: open-loop HOSIDF} depends on the number of harmonics denoted as $N_h$ included in the analysis. Define the prediction error as the difference between the prediction provided by Theorem \ref{thm: open-loop HOSIDF} and the simulation results. Figure \ref{fig:ex4_pe_nh_final} illustrates the relationship between the prediction error and the number of harmonics $N_h$. The results demonstrate that incorporating a higher number of harmonics in the calculations enhances prediction accuracy. Given that the true nonlinear output signal \( y_o(t) \) of the reset control system contains an infinite number of harmonics, ideally, as the number of harmonics approaches infinity, the prediction error converges to zero.
\begin{figure}[!t]
    \centering
    \includegraphics[width=0.7\columnwidth]{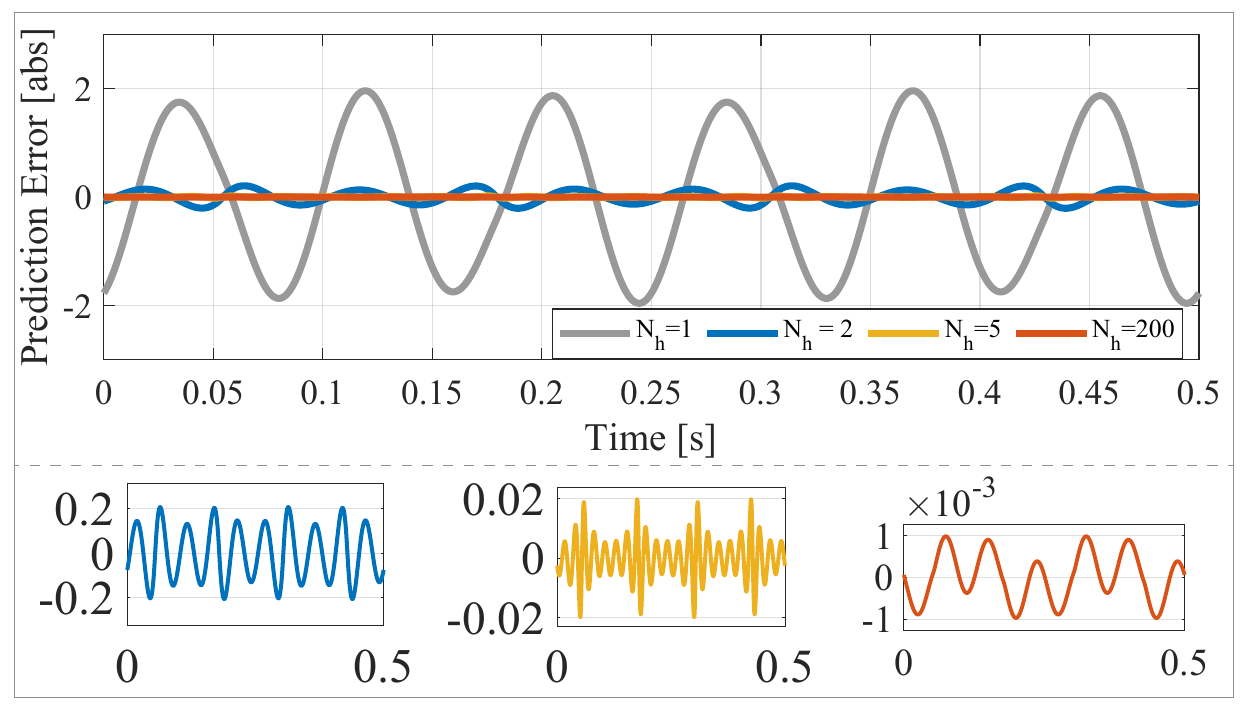}
    \caption{The relationship between the prediction error and the number of harmonics \( N_h \) considered in the calculation, with values \( N_h = 1 \), \( 2 \), \( 10 \), and \( 200 \).}
	\label{fig:ex4_pe_nh_final}
\end{figure}

After validating the accuracy of the open-loop analysis method, Theorem \ref{thm: open-loop HOSIDF} is used to perform a frequency-domain analysis of the open-loop reset control systems depicted in Fig. \ref{fig: open_loop_rcs}. Figure \ref{fig:ex4_Lo} shows the Bode plot of the open-loop HOSIDFs \(\mathcal{L}_n(\omega)\) for the illustrative open-loop reset control system. These HOSIDFs provide critical magnitude and phase information for each harmonic, which is essential for the effective design and optimization of the system. 
\begin{figure}[!t]
    \centering
    \includegraphics[width=0.75\columnwidth]{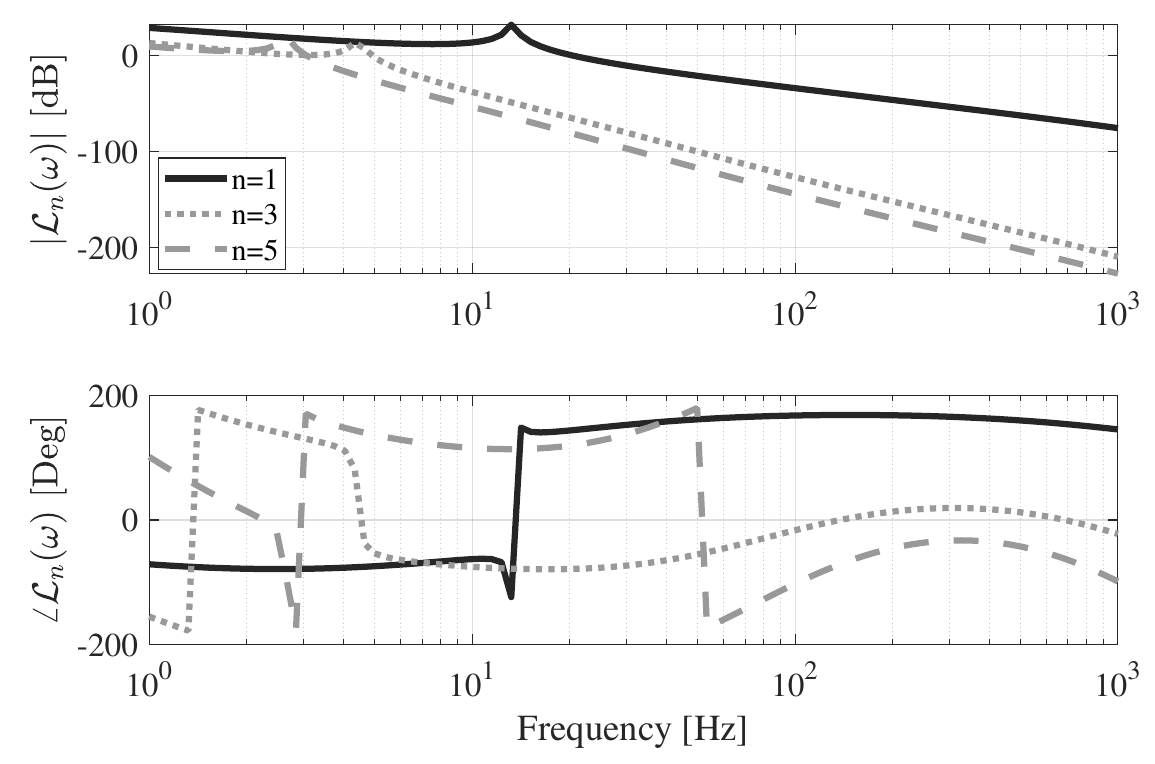}
    \caption{The HOSIDF $\mathcal{L}_n(\omega)$ of the open-loop reset control system with the first ($n=1$), third ($n=3$), and fifth ($n=5$) order harmonics.}
	\label{fig:ex4_Lo}
\end{figure}

To summarize this section, Theorem \ref{thm: open-loop HOSIDF} present accurate HOSIDF analysis for reset controllers and open-loop reset control systems. More importantly, these methods analytically decompose the HOSIDFs of the reset controller $\mathcal{C}_r$ into its base-linear transfer function $\mathcal{C}_l(\omega)$ and nonlinear components $\mathcal{C}_\rho^n(\omega)$ in \eqref{eq: Cr_hn_thm}. This decomposition serves as the foundation for the development of the closed-loop HOSIDFs, which will be elaborated in Section \ref{sec: Toolbox 3}.



\section{Main Result 2: Frequency Response Analysis Method and Validation for Closed-loop Reset Control Systems}
\label{sec: Toolbox 3}
\subsection{HOSIDFs for the Closed-Loop Reset Control Systems}
This section extends the closed-loop HOSIDF method, where \(\mathcal{C}_1 = \mathcal{C}_s = \mathcal{C}_3 = \mathcal{C}_4 = 1\) and \(\mathcal{C}_2 = 0\) as presented in \cite{ZHANG2024106063}, to the generalized reset control systems with two reset actions per steady-state cycle.

In a closed-loop reset control system, as depicted in Fig. \ref{fig1:RC system}, and under the conditions outlined in Assumption \ref{assum: closed-loop stability}, when the system is subjected to a single sinusoidal input signal of frequency \(\omega\), the resulting signals \(e(t)\) (error), \(z(t)\) (input to the reset controller), \(z_s(t)\) (reset-triggered signal), \(u(t)\) (control input), and \(y(t)\) (output) become periodic and share the same fundamental frequency as the input signal \cite{dastjerdi2022closed, pavlov2005convergent}, expressed as:
\begin{equation}
\label{eq: e,y,u}
    \begin{aligned}
        e(t) &= \sum\nolimits_{n=1}^{\infty}e^n(t) = \sum\nolimits_{n=1}^{\infty} |E^n|\sin(n\omega t+\angle E^n),\\
        z(t) &= \sum\nolimits_{n=1}^{\infty}z^n(t) = \sum\nolimits_{n=1}^{\infty} |Z^n|\sin(n\omega t+\angle Z^n),\\
        z_s(t) &= \sum\nolimits_{n=1}^{\infty}z_s^n(t) = \sum\nolimits_{n=1}^{\infty} |Z_s^n|\sin(n\omega t+\angle Z_s^n),\\
         &= \sum\nolimits_{n=1}^{\infty} |Z^n\mathcal{C}_s(n\omega)|\sin(n\omega t+\angle Z^n + \angle \mathcal{C}_s(n\omega)),\\
        u(t) &= \sum\nolimits_{n=1}^{\infty}u^n(t) = \sum\nolimits_{n=1}^{\infty} |U^n|\sin(n\omega t+\angle U^n),\\
        y(t) &= \sum\nolimits_{n=1}^{\infty}y^n(t) = \sum\nolimits_{n=1}^{\infty} |Y^n|\sin(n\omega t+\angle Y^n), 
    \end{aligned}
\end{equation}
where the phase for each signal, such as the $\angle E^n$, is defined within the range of $(-\pi,\pi]$. The Fourier transforms of the signals and their $n$th harmonic are denoted as $E(\omega)$ ($E^n(\omega)$), $Z(\omega)$ ($Z^n(\omega)$), $Z_s(\omega)$ ($Z_s^n(\omega)$), $U(\omega)$ ($U^n(\omega)$), and $Y(\omega)$ ($Y^n(\omega)$). 

In the sinusoidal-input frequency response analysis of closed-loop systems, two scenarios are identified: two-reset control systems, which undergo two resets per steady-state cycle, and multiple-reset control systems, which experience more than two resets per cycle. Multiple-reset actions are often indicative of high-magnitude higher-order harmonics that can impair performance and are therefore undesirable \cite{ZHANG2024106063}. Moreover, existing closed-loop SIDF analysis methods for reset control systems generally assume the operation of two-reset systems \cite{saikumar2021loop,van2024higher}. 

To this end, in the design of reset control systems, we apply the approach detailed in \cite{Xinxin_multiple_reset} to ensure that the system achieves two reset instants per steady-state cycle. This configuration ensures that the first-order harmonic \( z_s^1(t) \) dominates the reset-triggered signal \( z_s(t) \) as expressed in \eqref{eq: e,y,u}, while the contributions from higher-order harmonics \( z_s^n(t) \) for \( n > 1 \) are negligible. Based on this, we propose the following assumption:

\begin{assum}
\label{assum:2reset}
In the closed-loop reset control system with a sinusoidal input signal \(\sin(\omega t) \), the reset-triggered signal is given by \( z_s(t) = z_s^1(t) \).
\end{assum}

While this assumption may introduce some deviation in the closed-loop analysis, such deviations are expected to be minor, as will be demonstrated in the forthcoming examples. 

Under Assumption \ref{assum:2reset}, the set of reset instants of the closed-loop reset control system is given by $J := \{t_\eta = (\eta\pi - \angle Z_s^1)/\omega|\eta\in \mathbb{Z}^+\}$. Then, Theorem \ref{thm: closed-loop HOSIDF} introduces the HOSIDFs for the closed-loop two-reset control systems.

\begin{thm} 
\label{thm: closed-loop HOSIDF}
Consider a closed-loop two-reset control system in Fig. \ref{fig1:RC system}, with the input signal defined as \( r(t) = |R|\sin(\omega t) \), under Assumptions \ref{assum: closed-loop stability} and \ref{assum:2reset}. Utilizing the ``Virtual Harmonic Generator" approach \cite{nuij2006higher}, the input signal $r(t)$ generates harmonics \(r^n(t) = |R|\sin( n\omega t)\) with Fourier transforms of \( R^n(\omega) = |R|\mathscr{F}[\sin(n\omega t)] \). The $n$th Higher-Order Sinusoidal Input Sensitivity Function (HOSISF) \( \mathcal{S}_n(\omega) \), Higher-Order Sinusoidal Input Complementary Sensitivity Function \( \mathcal{T}_n(\omega) \), and the Higher-Order Sinusoidal Input Control Sensitivity Function \( \mathcal{CS}_n(\omega) \) are given as follows:
\begin{equation}
    \label{eq: sensitivity functions in CL}
    \begin{aligned}
    \mathcal{S}_n(\omega) &= \frac{E^n(\omega)}{R^n(\omega)}
    =\begin{cases}
	 {1}/{(1+\mathcal{L}_{o}(\omega))} , & \text{for } n=1,\\
     - \mathcal{S}_{l}(n\omega) \cdot|\mathcal{S}_1(\omega)|e^{jn\angle \mathcal{S}_1(\omega)} \cdot\Gamma(\omega)\mathcal{L}_n(\omega)\mathcal{C}_4(n\omega), & \text{for odd} \ n > 1,\\
	 0, & \text{for even} \ n \geqslant 2, \end{cases}
    \end{aligned}
\end{equation}
\begin{equation}
\label{eq: complementary sensitivity functions in CL}
\begin{aligned}
    \mathcal{T}_n(\omega) &= \frac{Y^n(\omega)}{R^n(\omega)}
    =\begin{cases}
    {\mathcal{L}_{o}(\omega)}/{[\mathcal{C}_{4}(\omega)\cdot(1+\mathcal{L}_{o}(\omega))]}, & \text{for } n=1,\\
   \mathcal{S}_{l}(n\omega) \cdot|\mathcal{S}_1(\omega)|e^{jn\angle \mathcal{S}_1(\omega)} \cdot\Gamma(\omega) \mathcal{L}_n(\omega), & \text{for odd} \ n > 1,\\
    0, & \text{for even} \ n \geqslant 2, \end{cases}
    \end{aligned}
\end{equation}
\begin{equation}
\begin{aligned}
    \mathcal{CS}_n(\omega) &= \frac{U^n(\omega)}{R^n(\omega)}
    =\begin{cases}
    {\mathcal{L}_{o}(\omega)}/{[\mathcal{C}_{4}(\omega)\cdot\mathcal{P}(\omega)\cdot(1+\mathcal{L}_{o}(\omega))]}, & \text{for } n=1,\\
   \mathcal{S}_{l}(n\omega) \cdot|\mathcal{S}_1(\omega)|e^{jn\angle \mathcal{S}_1(\omega)} \cdot\Gamma(\omega) \mathcal{L}_n(\omega)/\mathcal{P}(n\omega), & \text{for odd} \ n > 1,\\
    0, & \text{for even} \ n \geqslant 2, \end{cases}
\end{aligned} 		
\end{equation}
Where
\begin{equation}
\label{eq:Cw1,Lalpha}
    \begin{aligned}
    \mathcal{S}_{l}(n\omega)&= 1/(1+\mathcal{L}_{l}(n\omega)),\\
    \Psi_n(\omega) &= {|\mathcal{L}_{\rho}(n\omega)|}/{|1+\mathcal{L}_{l}(n\omega)|},\\
    \mathcal{L}_{o}(n\omega) &= \mathcal{L}_{n}(\omega)+(\Gamma(\omega)-1)\mathcal{L}_{\rho}(n\omega),\\
    \Delta_c^1(\omega) &=|\Delta_l(\omega)|\sin(\angle\Delta_l(\omega) -\angle \mathcal{C}_s(\omega)),\\   
    \mathcal{L}_{\rho}(n\omega) &= \mathcal{C}_{\rho}^n(\omega)\mathcal{C}_3(n\omega)\mathcal{P}(n\omega)\mathcal{C}_4(n\omega)\mathcal{C}_1(n\omega),\\
    \mathcal{L}_{l}(n\omega) &= {[\mathcal{C}_{l}(n\omega)+\mathcal{C}_{2}(n\omega)]\mathcal{C}_{3}(n\omega)\mathcal{P}(n\omega)\mathcal{C}_{4}(n\omega)}\mathcal{C}_{1}(n\omega),\\  
    \Gamma(\omega) &= 1/(1-{\sum\nolimits_{n=3}^{\infty}\Psi_n(\omega)\Delta_c^n(\omega)}{/\Delta_c^1(\omega)}), n = 2k+1, k\in\mathbb{N},\\ 
    \Delta_c^n(\omega) &= -|\Delta_l(n\omega)| \sin(\angle \Delta_l(n\omega) +\angle\mathcal{L}_{\rho}(n\omega) - 
     \angle (1+\mathcal{L}_{l}(n\omega))-n\angle \mathcal{C}_s(\omega)),\text{ for } n>1.    
    \end{aligned}
\end{equation}
The function \(\mathcal{L}_n(\omega)\) is provided in \eqref{eq: H_hn}, while the functions \(\mathcal{C}_\rho^n(\omega)\) and \(\Delta_l(n\omega)\) are defined in \eqref{eq: Delta_l, Delta_x, Delta_c, Delta_q, C_rho_n}.
\end{thm}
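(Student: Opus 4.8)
The plan is to build the closed-loop describing functions directly on the open-loop decomposition of Theorem~\ref{thm: open-loop HOSIDF}, in which the reset controller output splits into the base-linear transfer $\mathcal{C}_l$ and a purely nonlinear reset term $\mathcal{C}_\rho^n$, the latter being fully determined by the reset jump $\Delta_q$, which is itself proportional to the projection $\Delta_c$ of the reset-state trajectory onto the reset instants. Under Assumption~\ref{assum:2reset} the reset instants are pinned by the fundamental of $z_s$ to the two per period at $t_\eta=(\eta\pi-\angle Z_s^1)/\omega$, so the reset timing is linear in the fundamental; what changes relative to the open loop is that the reset-controller input $z$ now carries every odd harmonic fed back through the loop, and each such harmonic also contributes to the jump at those two instants. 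The derivation therefore proceeds in three stages: (i) set up and solve a self-consistency relation for the effective reset jump, producing the correction factor $\Gamma$; (ii) close the fundamental loop with an effective loop gain $\mathcal{L}_o$; and (iii) propagate the reset-generated higher harmonics through the linear loop at their own frequencies.

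The crux, and the step I expect to be the main obstacle, is stage (i). I would write the value of the reset state at the (shared) reset instants as a sum over all harmonics of $z$: the fundamental contributes $\Delta_c^1=|\Delta_l(\omega)|\sin(\angle\Delta_l(\omega)-\angle\mathcal{C}_s(\omega))$, while each higher odd harmonic $z^n$ contributes $\Delta_c^n$. Because $z^n$ is itself produced by the reset nonlinearity and transmitted once around the loop, its amplitude scales with $\Psi_n(\omega)=|\mathcal{L}_\rho(n\omega)|/|1+\mathcal{L}_l(n\omega)|$ and its phase at the reset instants combines $\angle\Delta_l(n\omega)$, $\angle\mathcal{L}_\rho(n\omega)$, $-\angle(1+\mathcal{L}_l(n\omega))$ and $-n\angle\mathcal{C}_s(\omega)$, giving exactly the stated $\Delta_c^n(\omega)$. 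Writing $G$ for the ratio of the true closed-loop jump to the jump the fundamental alone would produce, the self-consistency $G=1+G\sum_{n\geqslant3}\Psi_n\Delta_c^n/\Delta_c^1$ closes the feedback of the harmonics onto the jump and solves to $G=\Gamma(\omega)=1/(1-\sum_{n\geqslant3}\Psi_n\Delta_c^n/\Delta_c^1)$. The delicate part here is tracking the phase of each fed-back harmonic when it returns to the reset state and confirming that every contribution is proportional to the same jump, so that the feedback collapses into the single geometric correction $\Gamma$ rather than an intractable coupled system.

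With $\Gamma$ in hand the remaining stages are bookkeeping. For the fundamental I would lump the $\Gamma$-scaled reset term into the effective loop gain $\mathcal{L}_o(\omega)=\mathcal{L}_1(\omega)+(\Gamma(\omega)-1)\mathcal{L}_\rho(\omega)$ and read off $\mathcal{S}_1=1/(1+\mathcal{L}_o)$ by standard feedback algebra, with $\mathcal{T}_1=(\mathcal{L}_o/\mathcal{C}_4)\mathcal{S}_1$ and $\mathcal{CS}_1=\mathcal{T}_1/\mathcal{P}$ giving the remaining $n=1$ entries. For odd $n>1$ the reference supplies no genuine harmonic, so every $n$th harmonic is born at the reset element: it converts the fundamental of its input into the $n$th harmonic through $\mathcal{C}_\rho^n$, inheriting the magnitude $|\mathcal{S}_1|$ and an $n$-fold phase rotation $e^{jn\angle\mathcal{S}_1}$ of the fundamental sensitivity---the describing-function phase law that plays the role of the $e^{j(n-1)\angle\mathcal{C}_1}$ factor in Theorem~\ref{thm: open-loop HOSIDF}---while carrying $\Gamma$ and the harmonic gain $\mathcal{L}_n$. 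This injected harmonic is then shaped by the base-linear sensitivity $\mathcal{S}_l(n\omega)=1/(1+\mathcal{L}_l(n\omega))$ at frequency $n\omega$, yielding $\mathcal{T}_n=\mathcal{S}_l(n\omega)|\mathcal{S}_1|e^{jn\angle\mathcal{S}_1}\Gamma\mathcal{L}_n$; the structural identities $\mathcal{S}_n=-\mathcal{C}_4(n\omega)\mathcal{T}_n$ (the error harmonic is the fed-back output harmonic) and $\mathcal{CS}_n=\mathcal{T}_n/\mathcal{P}(n\omega)$ (since $y=\mathcal{P}u$) then supply the remaining expressions, and the even harmonics vanish because $\mathcal{C}_\rho^n=0$ for even $n$ by Theorem~\ref{thm: open-loop HOSIDF}. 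Finally I would verify that all three functions reduce to the results of \cite{ZHANG2024106063} in the special case $\mathcal{C}_1=\mathcal{C}_s=\mathcal{C}_3=\mathcal{C}_4=1$, $\mathcal{C}_2=0$.
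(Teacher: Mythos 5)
Your proposal follows essentially the same route as the paper's proof: the self-consistency relation $G = 1 + G\sum_{n\geqslant 3}\Psi_n\Delta_c^n/\Delta_c^1$ for the reset jump is exactly how the paper obtains $\Gamma(\omega)$ (via $\Gamma = V_\rho/M_\rho^1$ combined with $|Z^n| = \Gamma\Psi_n|Z^1|$), and your stages (ii)--(iii) match the paper's Step 3, including the effective loop gain $\mathcal{L}_o$, the $|\mathcal{S}_1|e^{jn\angle\mathcal{S}_1}$ phase law inherited by the reset-born harmonics, and the shaping by $\mathcal{S}_l(n\omega)$. The only cosmetic difference is that you derive $\mathcal{T}_n$ first and read off $\mathcal{S}_n$ and $\mathcal{CS}_n$ from the structural identities $\mathcal{S}_n = -\mathcal{C}_4(n\omega)\mathcal{T}_n$ and $\mathcal{CS}_n = \mathcal{T}_n/\mathcal{P}(n\omega)$, whereas the paper derives $\mathcal{S}_n$ in detail and states the others follow analogously.
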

\begin{proof}
    The proof is provided in \ref{Proof for Theorem closed-loop Sen}.
\end{proof}

Following the derivation process outlined in Theorem \ref{thm: closed-loop HOSIDF} and its proof in \ref{Proof for Theorem closed-loop Sen}, Corollary \ref{cor: dis_input} presents the Higher-Order Sinusoidal Input Process Sensitivity Function \( \mathcal{PS}_n(\omega) \) for closed-loop reset control systems.
\begin{cor}
\label{cor: dis_input}
Consider a closed-loop two-reset control system in Fig. \ref{fig1:RC system}, with the disturbance input signal \( d(t) = |D|\sin(\omega t) \), under Assumptions \ref{assum: closed-loop stability} and \ref{assum:2reset}. Utilizing the ``Virtual Harmonic Generator" \cite{nuij2006higher}, the input signal $d(t)$ generates harmonics \(d^n(t) = |D|\sin( n\omega t)\) with Fourier transforms of \( D^n(\omega) = |D|\mathscr{F}[\sin(n\omega t)] \). The $n$th Higher-Order Sinusoidal Input Process Sensitivity Function \( \mathcal{PS}_n(\omega) \) is given as follows:  
\begin{equation}
    \label{eq: PS in CL}
    \begin{aligned}
    \mathcal{PS}_n(\omega) &= \frac{E^n(\omega)}{D^n(\omega)}
    =\begin{cases}
	 {-\mathcal{P}(\omega)\mathcal{C}_4(\omega)}/{(1+\mathcal{L}_{o}(\omega))} , & \text{for } n=1,\\
      -\mathcal{S}_{l}(n\omega) \cdot|\mathcal{PS}_1(\omega)|e^{jn\angle \mathcal{PS}_1(\omega)} \cdot\Gamma(\omega)\mathcal{L}_n(\omega)\mathcal{C}_4(n\omega), & \text{for odd} \ n > 1,\\
	 0, & \text{for even} \ n \geqslant 2. \end{cases}
    \end{aligned}
\end{equation}
\end{cor}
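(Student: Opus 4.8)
The plan is to mirror the derivation of Theorem \ref{thm: closed-loop HOSIDF}, re-tracing the signal flow for a disturbance excitation rather than a reference, and to exploit the structural fact that the higher-harmonic generation mechanism of the reset element depends only on the first-order signal entering $\mathcal{C}_r$, not on which exogenous input produced it. First I would set up the closed-loop relations of Fig. \ref{fig1:RC system} with $r = n = 0$ and $d(t) = |D|\sin(\omega t)$, identifying the injection point of $d$ (at the plant input) and writing the error $e$ in terms of $d$ and the loop transfers. Working at the fundamental frequency, I would replace $\mathcal{C}_r$ by its first-order describing function $\mathcal{C}_l(\omega) + \mathcal{C}_\rho^1(\omega)$ and use the effective open-loop $\mathcal{L}_o(\omega)$ of \eqref{eq:Cw1,Lalpha} (which already folds in the $\Gamma$-corrected nonlinear contribution). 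The standard closed-loop algebra then yields the first-order process sensitivity $\mathcal{PS}_1(\omega) = -\mathcal{P}(\omega)\mathcal{C}_4(\omega)/(1+\mathcal{L}_o(\omega))$; the factor $-\mathcal{P}(\omega)\mathcal{C}_4(\omega)$ reflects that the disturbance reaches the error through the plant and the feedback path $\mathcal{C}_4$, which is the only first-order difference from the reference case $\mathcal{S}_1(\omega) = 1/(1+\mathcal{L}_o(\omega))$ of \eqref{eq: sensitivity functions in CL}.

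Next I would extract the first-order signal $z^1$ at the input of $\mathcal{C}_r$ as a function of $D$; under Assumption \ref{assum:2reset} this signal alone fixes the reset instants $J = \{t_\eta = (\eta\pi - \angle Z_s^1)/\omega\}$. Because the reset-instant structure, the jump map $A_\rho$, and the quantities $\Delta_l,\Delta_x,\Delta_q,\mathcal{C}_\rho^n$ of \eqref{eq: Delta_l, Delta_x, Delta_c, Delta_q, C_rho_n} depend on $z^1$ only through its magnitude and phase, the $n$th harmonic injected by the reset action at the controller output takes exactly the same functional form as in the reference case, with the phase term $e^{jn\angle \mathcal{S}_1(\omega)}$ replaced by $e^{jn\angle \mathcal{PS}_1(\omega)}$ and the magnitude scaled by $|\mathcal{PS}_1(\omega)|$ rather than $|\mathcal{S}_1(\omega)|$. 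Propagating this injected harmonic forward through $\mathcal{C}_3(n\omega)$, $\mathcal{P}(n\omega)$, $\mathcal{C}_4(n\omega)$, $\mathcal{C}_1(n\omega)$ and the fundamental-harmonic closed-loop factor $\mathcal{S}_l(n\omega)$, and applying the self-consistency correction $\Gamma(\omega)$ for harmonics that re-enter the loop, reproduces precisely the product $-\mathcal{S}_l(n\omega)|\mathcal{PS}_1(\omega)|e^{jn\angle\mathcal{PS}_1(\omega)}\Gamma(\omega)\mathcal{L}_n(\omega)\mathcal{C}_4(n\omega)$ of \eqref{eq: PS in CL}, while the even harmonics vanish exactly as in \eqref{eq: Cr_hn_thm}.

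The main obstacle, as in Theorem \ref{thm: closed-loop HOSIDF}, is justifying rigorously that the self-consistent feedback of higher harmonics, encapsulated in the factor $\Gamma(\omega)$ and the sum over $\Delta_c^n(\omega)/\Delta_c^1(\omega)$ in \eqref{eq:Cw1,Lalpha}, carries over unchanged from the reference to the disturbance case. I would establish this by observing that $\Gamma(\omega)$ is determined solely by the reset element's intrinsic response to its first-order input (the relative harmonic content of $\Delta_c$) together with the fundamental-harmonic loop sensitivities $\mathcal{S}_l(n\omega)$, none of which references the excitation source. Once this invariance is secured, the corollary follows from Theorem \ref{thm: closed-loop HOSIDF} by substituting $\mathcal{PS}_1(\omega)$ for $\mathcal{S}_1(\omega)$ throughout the higher-order expression and verifying the first-order term by the direct closed-loop algebra described above.
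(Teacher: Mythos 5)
Your proposal is correct and follows essentially the same route as the paper, which itself presents Corollary \ref{cor: dis_input} as a direct re-run of the proof of Theorem \ref{thm: closed-loop HOSIDF} with the excitation moved to the plant input: the only first-order change is the factor $-\mathcal{P}(\omega)\mathcal{C}_4(\omega)$ in the numerator, and for $n>1$ one replaces $Z^{1n}(\omega)=|\mathcal{C}_1(\omega)\mathcal{S}_1(\omega)|R^n(\omega)e^{jn(\angle\mathcal{C}_1(\omega)+\angle\mathcal{S}_1(\omega))}$ by the analogous expression with $\mathcal{PS}_1$ and $D^n$. Your observation that $\Gamma(\omega)$ in \eqref{eq: gamma_final} depends only on loop quantities ($\Psi_n$, $\Delta_c^n/\Delta_c^1$, $\mathcal{C}_s$) and not on the excitation point is exactly the invariance the paper relies on implicitly.
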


Based on Theorem \ref{thm: closed-loop HOSIDF}, Remark \ref{cor: e,y,u,v} provides a method for calculating the steady-state trajectories of sinusoidal reference input in closed-loop reset control systems.
\begin{rem}
\label{cor: e,y,u,v}
Under Assumptions \ref{assum: closed-loop stability} and \ref{assum:2reset}, in a closed-loop reset control system in Fig. \ref{fig1:RC system} with a sinusoidal reference signal $r(t) = |R|\sin(\omega t)$, the steady-state error signal $e_r(t)$, output signal $y_r(t)$, and control input signal $u_r(t)$ are given by
    \begin{equation}
        \begin{aligned}
            e_r(t) &= \sum\nolimits_{n=1}^{\infty} |R|\cdot|\mathcal{S}_n(\omega)|\sin(n\omega t + \angle \mathcal{S}_n(\omega)),\\
            y_r(t) &= \sum\nolimits_{n=1}^{\infty} |R|\cdot| \mathcal{T}_n(\omega)|\sin(n\omega t + \angle \mathcal{T}_n(\omega)),\\
            u_r(t) &=\sum\nolimits_{n=1}^{\infty}  |R|\cdot|\mathcal{CS}_n(\omega)|\sin(n\omega t + \angle \mathcal{CS}_n(\omega)).
        \end{aligned}
    \end{equation}
\end{rem}

Based on Corollary \ref{cor: dis_input}, Remark \ref{rem: ed} provides a method for calculating the steady-state error in a closed-loop reset control system when subjected to a sinusoidal disturbance input.
\begin{rem}
\label{rem: ed}
Under Assumptions \ref{assum: closed-loop stability} and \ref{assum:2reset}, the steady-state error signal \( e_d(t) \) of a closed-loop reset control system in Fig. \ref{fig1:RC system}, with a sinusoidal disturbance input \( d(t) = |D|\sin(\omega t) \), is given by: 
    \begin{equation}
        \begin{aligned}
            e_d(t) &= \sum\nolimits_{n=1}^{\infty} |D|\cdot|\mathcal{PS}_n(\omega)|\sin(n\omega t + \angle \mathcal{PS}_n(\omega)).
        \end{aligned}
    \end{equation}
\end{rem}
\subsection{Validation of the Closed-loop HOSIDFs}
This subsection uses illustrative examples and conducts simulations and experiments to validate the accuracy of Theorem \ref{thm: closed-loop HOSIDF} and Corollary \ref{cor: dis_input}.

The illustrative system is designed within the generalized structure shown in Fig. \ref{fig1:RC system}, with its parameters specified as follows: \(\mathcal{C}_1(s) = {(s/(150\pi) + 1)}/{(s/(3000\pi) + 1)}\), \(\mathcal{C}_s(s) = 1/(s/100 +1)\), the reset controller is built with a BLS system \(\mathcal{C}_l = 1/(s/(300\pi) + 1)\) with a reset value $\gamma=0$, \(\mathcal{C}_2(s) =\mathcal{C}_4(s) =  1\), \(\mathcal{C}_3(s) = 45 \cdot {(s/(300\pi) + 1)}/{(s/(30000\pi) + 1)} \cdot {(s+30\pi)/s} \cdot {(s/(130\pi) + 1)}/{(s/(699\pi) + 1)}  \cdot {1}/{(s/(3000\pi) + 1)}\), and the plant \(\mathcal{P}(s)\) is the precision motion stage given in \eqref{eq:P(s)}. The system has been verified to be both stable and convergent. Additionally, the two-reset condition outlined in \cite{Xinxin_multiple_reset} is applied to ensure that this reset control system, when subjected to sinusoidal inputs, exhibits two reset instants per steady-state cycle across the entire operating frequency range.

To validate the accuracy of Theorem \ref{thm: closed-loop HOSIDF}, let \(||e_r||_\infty/||r||_\infty\) and \(||u_r||_\infty/||r||_\infty\) denote the ratios of the \(\mathscr{L}_\infty\) norms of the steady-state error \(e_r\) and control input \(u_r\) to the sinusoidal reference input \(r = \sin(\omega t)\), respectively. Figures \ref{Ex2_pre_sim}(a) and (b) compare the values derived from simulations with those predicted by Theorem \ref{thm: closed-loop HOSIDF}. The results confirm that Theorem \ref{thm: closed-loop HOSIDF} accurately predicts system dynamics across the frequency range \( [1,1000] \) Hz. Similar to the open-loop HOSIDF analysis in Fig. \ref{fig:ex4_pe_nh_final}, prediction accuracy improves with the number of harmonics \(N_h\) considered in the computation. In this study, \(N_h = 100\) is used to ensure reliable predictions. 

Next, experimental validation of Theorem \ref{thm: closed-loop HOSIDF} is conducted. Figures \ref{Ex2_pre_sim}(c) and (d) compare the steady-state error \( e_r(t) \) and control input \( u_r(t) \) of the system under a reference input \( r(t) = 6 \times 10^{-7} \sin(400\pi t) \) [m], obtained from simulations, experimental measurements, and predictions based on Theorem \ref{thm: closed-loop HOSIDF}. The results demonstrate good agreement between the predictions and and simulation data. Minor discrepancies between the experimental and simulation results can be attributed to approximations in system identification and noise in the measurements.

\begin{figure}[htp]
    \centering
    \includegraphics[width=1\columnwidth]{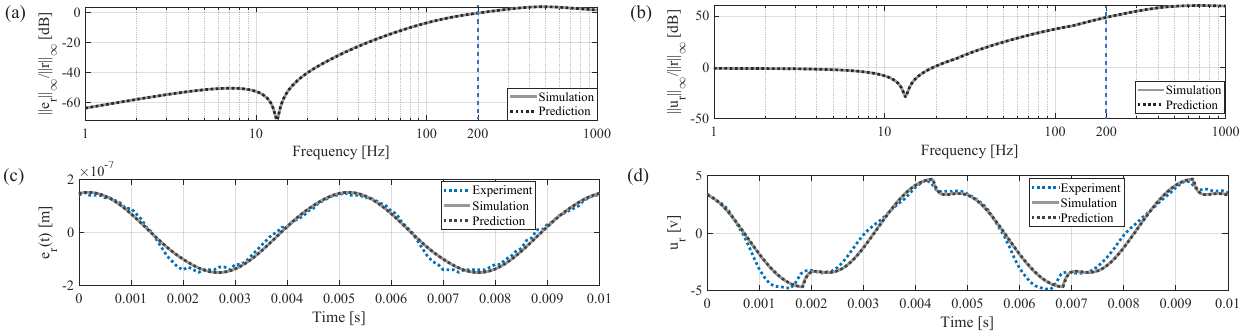}
    \caption{Theorem \ref{thm: closed-loop HOSIDF}-predicted and simulated values for (a) \(||e_r||_\infty/||r||_\infty\) and (b) \(||u_r||_\infty/||r||_\infty\) of the reset control system across the frequency range \([1, 1000]\) Hz. (c) Steady-state error signal \(e_r(t)\) and (d) control input signal \(u_r(t)\) for the system under the reference input \(r(t) = 6 \times 10^{-7} \sin(400\pi t)\) [m], as determined by Theorem \ref{thm: closed-loop HOSIDF} prediction, simulation, and experimental results.}
	\label{Ex2_pre_sim}
\end{figure}

Similarly, the accuracy of Corollary \ref{cor: dis_input} is validated. Figure \ref{Ex2_pre_sim_d}(a) compares the \(||e_d||_\infty/||d||_\infty\) values derived from predictions and simulations. Figure \ref{Ex2_pre_sim_d}(b) compares the steady-state error \(e_d(t)\) of the system under a disturbance input \(d(t) = 1 \times 10^{-4}\sin(40\pi t)\) [m], obtained from predictions, simulations, and experiments. The results confirm that Corollary \ref{cor: dis_input} accurately predicts the system's response to sinusoidal disturbances.




\begin{figure}[htp]
    \centering
    \includegraphics[width=0.7\columnwidth]{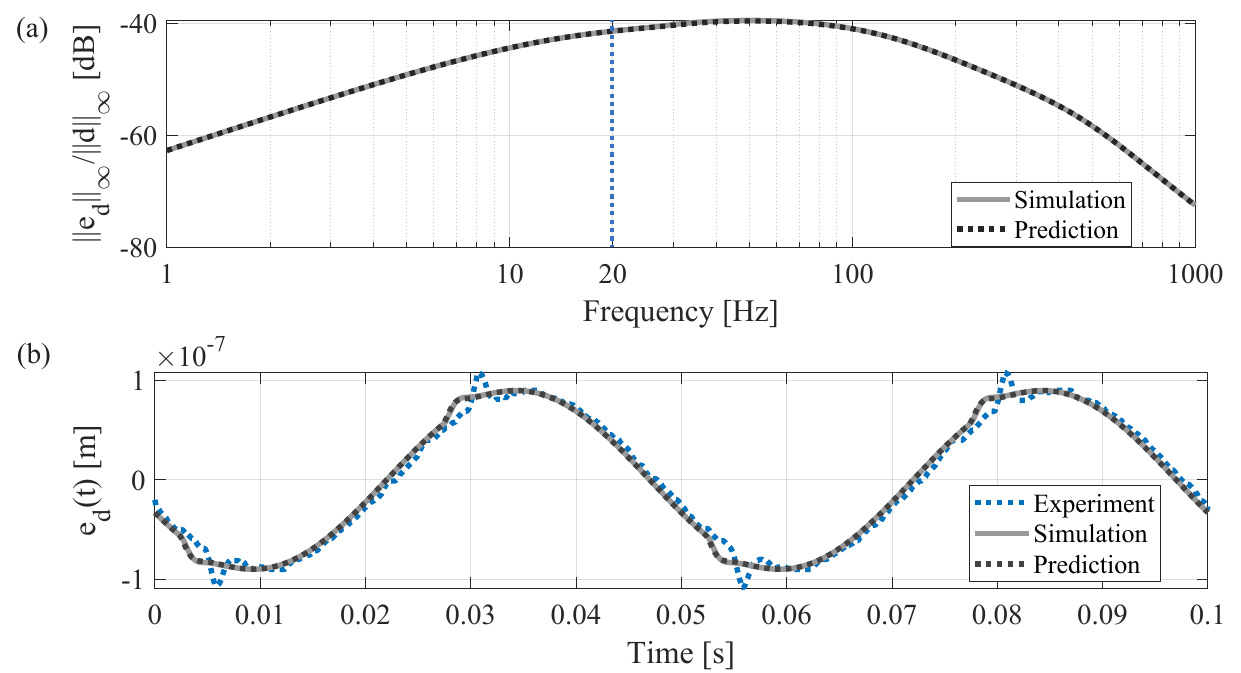}
    \caption{(a) Corollary \ref{cor: dis_input}-predicted and simulated \(||e_d||_\infty/||d||_\infty\) values of the reset control system across the frequency range \([1, 1000]\) Hz. (b) Comparison of Theorem \ref{thm: closed-loop HOSIDF}-predicted, simulated, and experimentally measured closed-loop steady-state error signal \(e_d(t)\) under the reference input signal $d(t) = 1 \times 10^{-4} \sin(40\pi t)$ [m].}
	\label{Ex2_pre_sim_d}
\end{figure}

After validating the accuracy, Theorem \ref{thm: closed-loop HOSIDF} and Corollary \ref{cor: dis_input} can be reliably employed to predict the behavior of closed-loop two-reset control systems. For illustration, Fig. \ref{validate_thm3_frequency_domain_sen} presents the Bode plots of the sensitivity function and the process sensitivity function for the closed-loop reset control system. The magnitude and phase information for each harmonic of the closed-loop reset control systems form the basis for analyzing system dynamics, such as reference tracking, and disturbance and noise rejection capabilities.
\begin{figure}[htp]
    \centering
    \includegraphics[width=1\columnwidth]{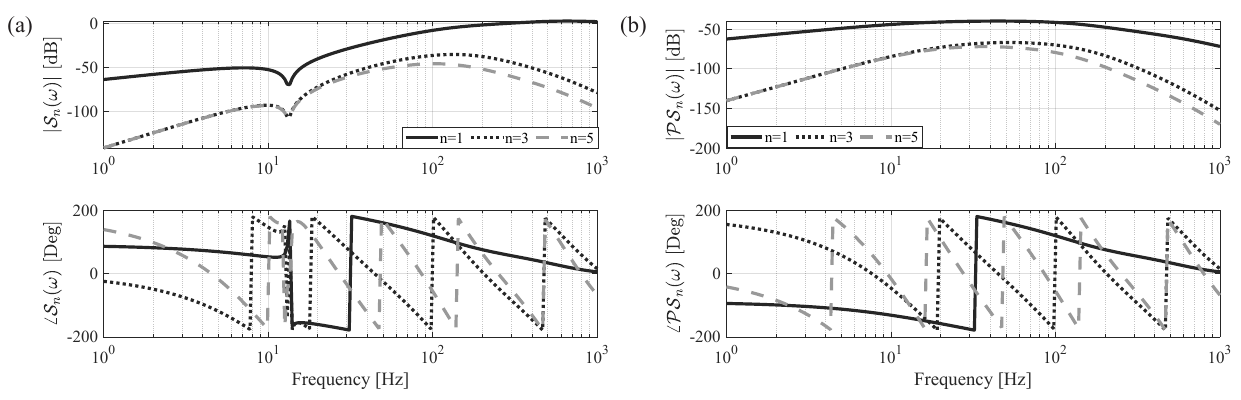}
    \caption{(a) The sensitivity function $\mathcal{S}_n(\omega)$ and (b) the process sensitivity function $\mathcal{PS}_n(\omega)$ of a closed-loop reset control system with $n=1, 3, 5$.}
	\label{validate_thm3_frequency_domain_sen}
\end{figure}


\section{Main Result 3: MATLAB App ``Reset Far" for Frequency Response Analysis of Generalized Reset Control Systems}
\label{sec:matlab app}
The HOSIDFs for open-loop and closed-loop generalized reset feedback control systems, depicted in Fig. \ref{fig1:RC system} and formulated in Theorems \ref{thm: open-loop HOSIDF} and \ref{thm: closed-loop HOSIDF}, have been integrated into a MATLAB application. The graphical user interface (GUI) of the App is shown in Fig. \ref{Reset_Far_App_figure}. It features five panels, each dedicated to specific functions as detailed below:
\begin{figure}[!t]
\centering
    \includegraphics[width=0.85\columnwidth]{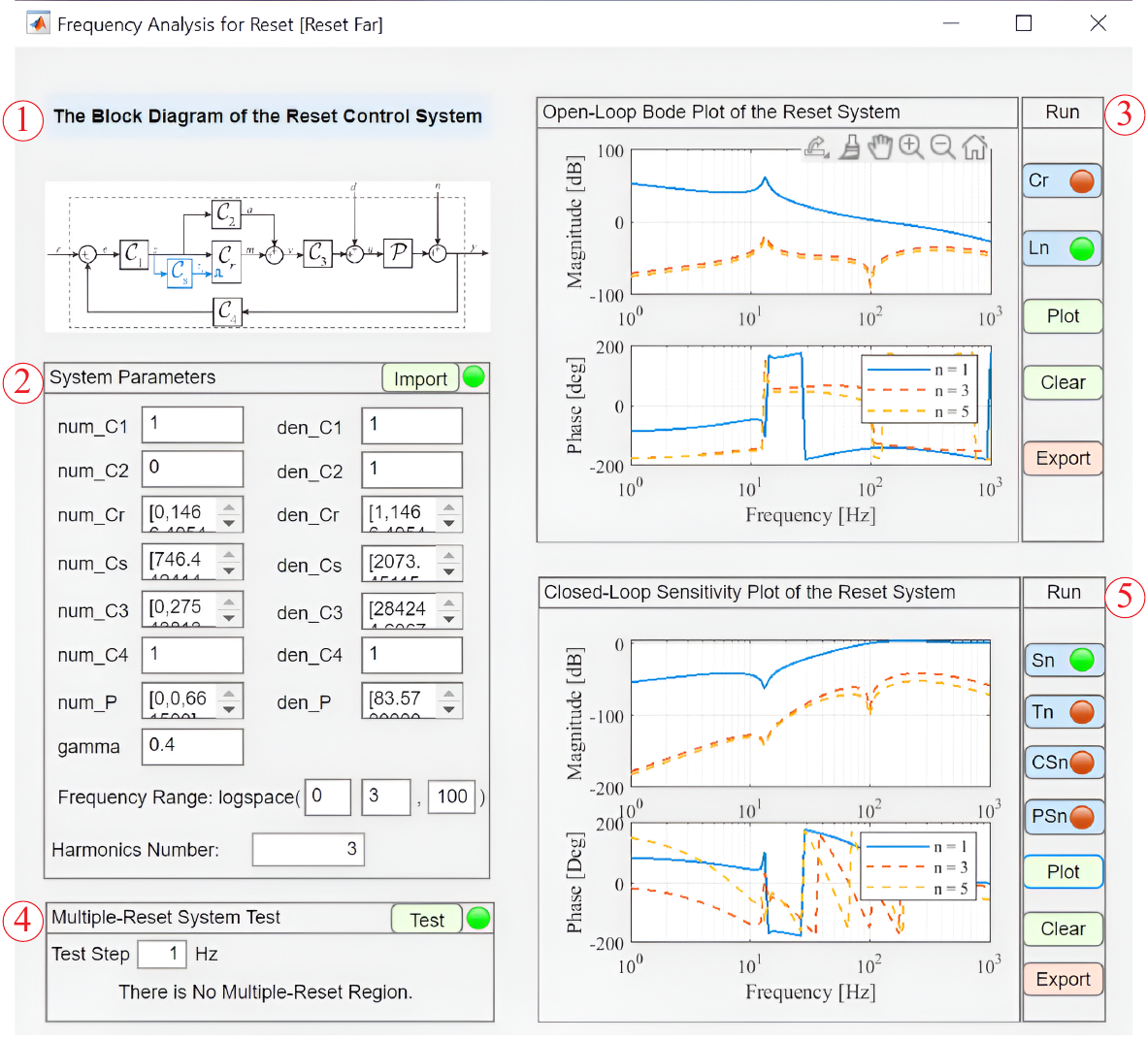}
    \caption{GUI of the frequency response analysis App for the generalized reset control system, named ``Reset Far".}
	\label{Reset_Far_App_figure}
\end{figure}
\begin{itemize}
    \item Panel \circled{1}: Displays the block diagram of the reset feedback control system in Fig. \ref{fig1:RC system}.
    \item Panel \circled{2}: Allows users to specify system parameters, including the numerators and denominators for systems \(\mathcal{C}_1\), \(\mathcal{C}_2\), \(\mathcal{C}_3\), \(\mathcal{C}_4\), \(\mathcal{C}_s\), \(\mathcal{C}_r\) (entered as the parameters of its base-linear counterpart \(\mathcal{C}_l\)), along with the reset value \(\gamma\), and the plant \(\mathcal{P}\). Additionally, the panel includes input fields for defining the frequency range for analysis (logarithmically spaced) and the number of harmonics to be considered.
    \item Panel \circled{3}: Select either ``Cr" or ``Ln" until the indicator turns green, then click the ``Plot" button. The HOSIDFs for the reset controller \(\mathcal{C}_r\) and the open-loop system \(\mathcal{L}_n(\omega)\), as derived from Theorem \ref{thm: open-loop HOSIDF}, will be displayed in this panel. Use the ``Clear" button to remove the plots, and the ``Export" button to save the HOSIDF data as a ``.mat" file to the workspace.
    \item Panel \circled{4}: Identifies the frequency range where the sinusoidal-input closed-loop reset control system exhibits multiple (more than two) reset instants per steady-state cycle based on the method in \cite{Xinxin_multiple_reset}. To use it, click the ``Test" button, which turns green when active, and select the sweeping step size, defaulting to 1 Hz. The output will either indicate ``There is No Multiple-Reset Region," meaning the system operates with only two reset instants per cycle across the tested frequency range, or it will specify ``Multiple-Reset Regions: \(f_\alpha\) to \(f_\beta\) [Hz]," showing the frequency range(s) where multiple resets occur, with \(f_\alpha,\ f_\beta\in\mathbb{R}^+\) as the boundaries. If multiple-reset regions are detected, subsequent closed-loop HOSIDF analysis may yield inaccuracies, and adjusting system design parameters is recommended until ``There is No Multiple-Reset Region" is achieved.
    \item Panel \circled{5}: Generates HOSIDFs for the closed-loop reset control system, including \(\mathcal{S}_n(\omega)\), \(\mathcal{T}_n(\omega)\), \(\mathcal{CS}_n(\omega)\), and \(\mathcal{PS}_n(\omega)\) based on Theorem \ref{thm: closed-loop HOSIDF} and Corollary \ref{cor: dis_input}. First, select ``Sn", ``Tn", ``CSn", or ``PSn" until the indicator turns green, then click ``Plot" to display. The ``Clear" button erases the plots, while ``Export" saves the HOSIDF data as a ``.mat" file to the workspace.
\end{itemize}
The App, along with detailed instructions to guide users through its usage, is accessible via the supplementary files.


\section{Case Study: Utilizing the MATLAB App ``Reset Far" for Frequency-Domain Analysis of Reset Control Systems}
\label{sec: Application of Theorems}
This section presents case studies to demonstrate the effectiveness of the proposed frequency response methods from Theorem \ref{thm: open-loop HOSIDF} and Theorem \ref{thm: closed-loop HOSIDF} in the frequency-domain analysis of reset control systems, applied to the precision motion stage \(\mathcal{P}(s)\) in \eqref{eq:P(s)}. 
\subsection{Frequency-Domain Analysis of Reset Control Systems}
\label{case study}
We design three control systems—PID, Constant-in-gain-Lead-in-phase (CgLp)-PID, and shaped CgLp-PID—as case studies. Note that these systems are primarily used to demonstrate the application of the proposed methods in system analysis, rather than representing optimized designs. The stability and convergence of the illustrative reset control system have been verified.

The CgLp reset element, as proposed in \cite{saikumar2019constant}, is composed of a First-Order Reset Element (FORE) and a lead element, as illustrated in Fig. \ref{fig:CsCgLp_structure}(a). 
The transfer function of the PID controller is defined as
\begin{equation}
    {\text{PID}} = k_p \biggl(1+\frac{\omega_i}{s}\biggr) \biggl(\frac{s/\omega_d+1}{s/\omega_t+1}\biggr)\biggl(\frac{1}{s/\omega_f+1}\biggr).
\end{equation}
\begin{figure}[!t]
\centering
    \includegraphics[width=0.6\columnwidth]{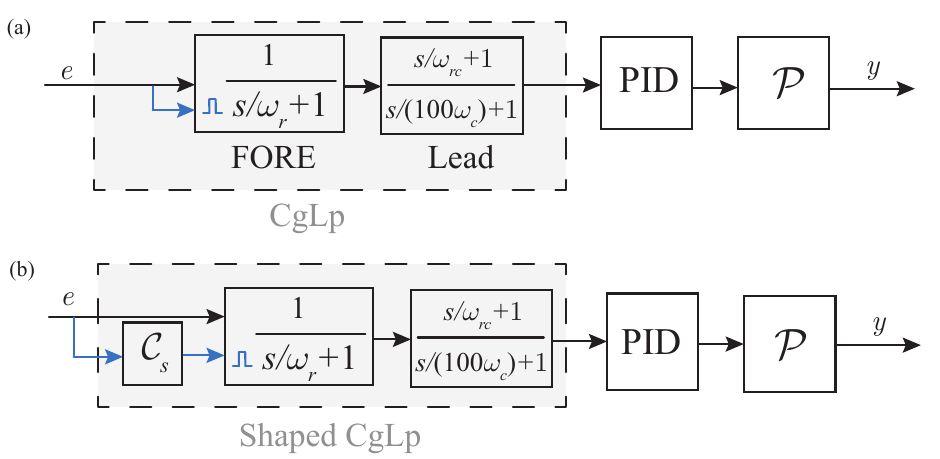}
    \caption{Block diagrams of the open-loop (a) CgLp-PID and (b) shaped CgLp-PID control systems.}
	\label{fig:CsCgLp_structure}
\end{figure}

By leveraging the phase lead advantage of reset control, the CgLp-PID element in this study is designed to provide phase lead while maintaining similar gain properties to a linear PID controller \cite{saikumar2019constant}. The cross-over frequency of \(\mathcal{L}_1(\omega)\) from \eqref{eq: H_hn}, where \(|\mathcal{L}_1(\omega)| = 0\) dB, is defined as the bandwidth of the open-loop system. The design parameters for the CgLp-PID control system are as follows: \( k_p = 35.7 \), \( \omega_c = 240\pi \) [rad/s], \( \omega_r = 244.8\pi \) [rad/s], \( \gamma=0 \), \( \omega_d = 120\pi \) [rad/s], \( \omega_t = 480\pi \) [rad/s], \( \omega_{rc} = 216\pi \) [rad/s], \( \omega_i = 24\pi \) [rad/s], and  \( \omega_f = 2400\pi \) [rad/s]. As shown in the Bode plots in Fig. \ref{fig:Ln_cglp_pid}, both the CgLp-PID and PID systems achieve a bandwidth of 120 Hz and maintain identical low-frequency gains. However, the CgLp-PID system achieves a phase margin of 40.7 degrees, providing a 15-degree improvement over the PID system's phase margin of 25.7 degrees.

\begin{figure}[!t]
\centering
    \includegraphics[width=0.73\columnwidth]{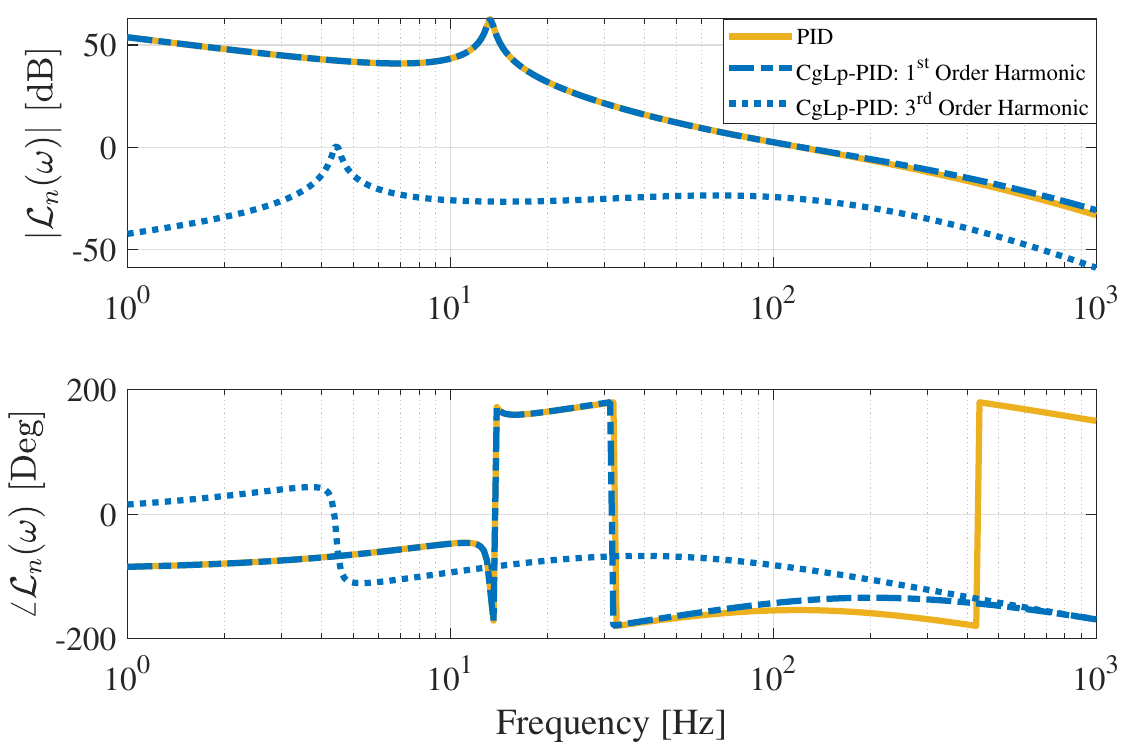}
    \caption{Bode plots for the PID control system alongside the first (\(n=1\)) and third (\(n=3\)) order HOSIDF \(\mathcal{L}_n\) for the CgLp-PID control systems.}
	\label{fig:Ln_cglp_pid}
\end{figure}


A shaping filter \(\mathcal{C}_s(s)\) is designed and integrated into the CgLp-PID control system to form the shaped CgLp-PID control system, as shown in Fig. \ref{fig:CsCgLp_structure}(b). Note that in this case study, the shaping filter $\mathcal{C}_s(s)$ is specifically designed to reduce high-order harmonics of the CgLp-PID control system at the target frequency 100 Hz. By adjusting the parameters of \(\mathcal{C}_s(s)\), high-order harmonics at other targeted frequencies can be reduced as well. However, since this example primarily serves as an example to illustrate the application of the proposed frequency response analysis methods, the detailed design and tuning process of the shaping filter will be explored in future research. The transfer function of $\mathcal{C}_s(s)$ is given by
\begin{equation}
\label{EQ: CS_DE}
    \mathcal{C}_s(s) = \frac{s/(660\pi) + 1}{s/(237.6\pi) + 1}.
\end{equation}

Then, Theorems \ref{thm: open-loop HOSIDF} and \ref{thm: closed-loop HOSIDF} are employed to analyze and compare the frequency-domain characteristics of the PID, CgLp-PID, and shaped CgLp-PID control systems.

First, using Theorem \ref{thm: open-loop HOSIDF} (Panel \circled{3} in Fig. \ref{Reset_Far_App_figure}), the parameters of the shaped CgLp-PID system are tuned to \( \omega_r = 466.8\pi \) [rad/s] and \( \gamma = 0.4 \), ensuring the same phase margin as the CgLp-PID system. Figure \ref{Cscglp_Ln_final} shows the Bode plots of the PID control system along with the open-loop HOSIDF \(\mathcal{L}_n(\omega)\) for both the CgLp-PID and shaped CgLp-PID control systems, with \(n=1\) and \(n=3\). For simplicity, high-order harmonics (\(n > 3\)) are omitted in the figure, as they have lower magnitudes than the third-order harmonics but can be derived using Theorem \ref{thm: open-loop HOSIDF}.

As shown in Fig. \ref{Cscglp_Ln_final}, the shaped CgLp-PID system maintains the same phase margin as the CgLp-PID system while offering a larger bandwidth. Additionally, it effectively reduces high-order harmonics. Specifically, at an input frequency of 100 Hz, the magnitude of the third-order harmonic is decreased from 0.0592 in the CgLp-PID system to $9.14 \times 10^{-5}$ in the shaped CgLp-PID system, representing a reduction of 99.85\%. 

%

\begin{figure}[!t]
\centering
    \includegraphics[width=0.8\columnwidth]{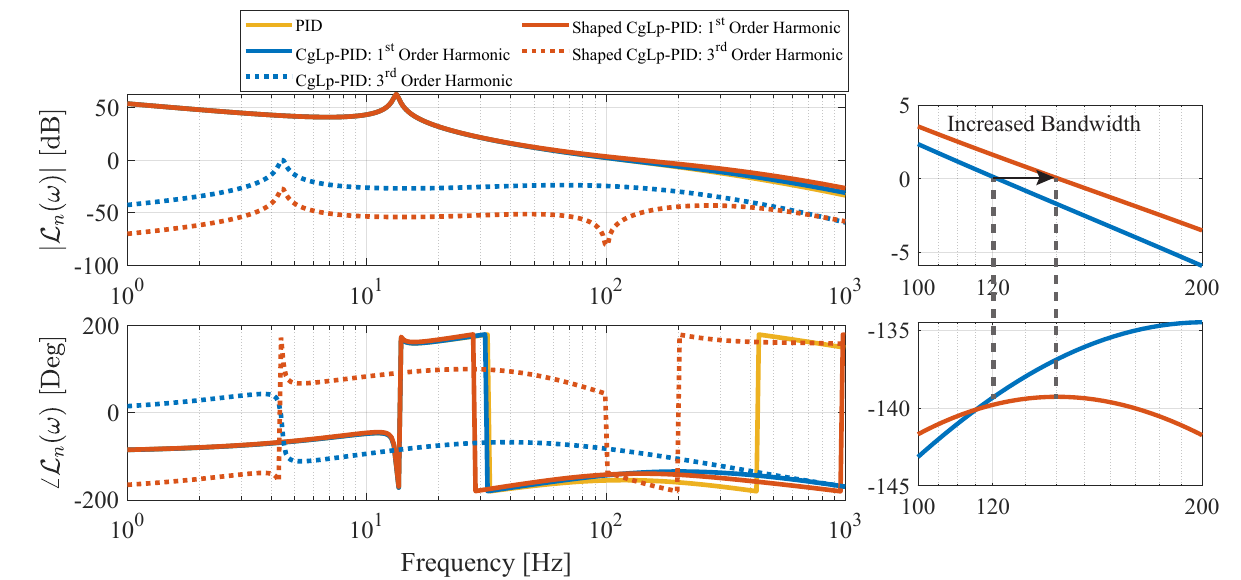}
    \caption{Bode plots of the PID system and open-loop HOSIDF \(\mathcal{L}_n\) for the CgLp-PID and shaped CgLp-PID control systems, with harmonics \(n=1\) and \(n=3\). The zoomed-in figure on the right highlights the first-order harmonic within the frequency range \([100, 300]\) Hz.}
	\label{Cscglp_Ln_final}
\end{figure}

Second, the multiple-reset control system identification tool \cite{Xinxin_multiple_reset} (Panel \circled{4} in Fig. \ref{Reset_Far_App_figure}) is applied to verify that both the sinusoidal-input CgLp-PID and shaped CgLp-PID control systems operate as two-reset control systems within the working frequency range of \([1,1000]\) Hz. This verification ensures that the two-reset condition is met for accurate closed-loop HOSIDF analysis.
\begin{figure}[!t]
    \centering
    \includegraphics[width=0.95\columnwidth]{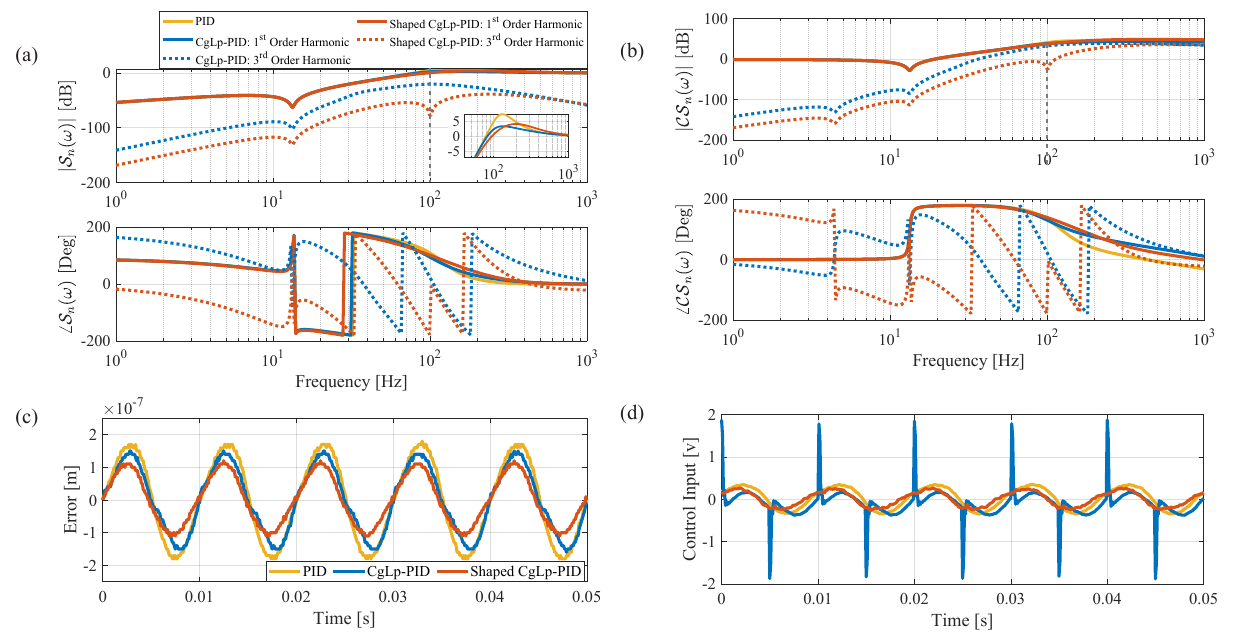}
    \caption{(a) The Higher-Order Sinusoidal Input Sensitivity Function (HOSISF) \(\mathcal{S}_n\) and (b) Control Sensitivity Function \(\mathcal{CS}_n(\omega)\) for the closed-loop PID, CgLp-PID, and shaped CgLp-PID systems, where $n=1,3$. (c) The experimentally measured steady-state errors and (d) control input signals for these three systems at the input frequency of 100 Hz.}
	\label{fig:CScglp_Sn_Csn}
\end{figure}

Third, Theorem \ref{thm: closed-loop HOSIDF} (Panel \circled{5} in Fig. \ref{Reset_Far_App_figure}) is applied to perform the closed-loop frequency response analysis for these three systems. Figures \ref{fig:CScglp_Sn_Csn}(a) and (b) show the sensitivity function \(\mathcal{S}_n\) and the control sensitivity function \(\mathcal{CS}_n\) for the PID control system with $n=1$, as well as for CgLp-PID and shaped CgLp-PID control systems, with \(n=1\) and \(n=3\).

From the analysis of the sensitivity function in Fig. \ref{fig:CScglp_Sn_Csn}(a), the CgLp-PID and shaped CgLp-PID control systems exhibit similar first-order harmonics. However, in the shaped CgLp-PID system, a reduction in the magnitude of high-order harmonics (\(n=3\)) is observed. Specifically, at an input frequency of 100 Hz, the value of $|\mathcal{S}_3(\omega)|$ decreases from 0.096 to $1.36 \times 10^{-4}$, which corresponds to a 99.86\% reduction. This decrease in the sensitivity function will result in a corresponding reduction in steady-state errors, as demonstrated by the subsequent experimental results.

Figures \ref{fig:CScglp_Sn_Csn}(c) and (d) present the experimentally measured steady-state error and control input signals for the three control systems when subjected to a sinusoidal input signal \(r(t) = 1.2 \times 10^{-7}\sin(200\pi t)\) [m]. For a quantitative analysis, Table \ref{tb: e_u_3systems} provides a summary of the maximum errors and control inputs for the three systems. Notably, the shaped CgLp-PID control system achieves a 21.43\% reduction in maximum error compared to the CgLp-PID control system at 100 Hz. This improvement in precision is primarily attributed to the reduction in $|\mathcal{S}_n(\omega)|$ at 100 Hz, as shown in Fig. \ref{fig:CScglp_Sn_Csn}(a).

Additionally, the advantages of reducing high-order harmonics in the shaped CgLp-PID system are more pronounced in the control input signal. The control sensitivity function analysis in Fig. \ref{fig:CScglp_Sn_Csn}(b) shows that the CgLp-PID system exhibits substantial high-magnitude high-order harmonics at 100 Hz, which are nearly equal to the first-order harmonic. This results in noticeable spikes in the control input signal, as observed in Fig. \ref{fig:CScglp_Sn_Csn}(d). In contrast, the shaped CgLp-PID system effectively reduces these high-order harmonics, leading to a smoother, more linear control input signal. As highlighted in Table \ref{tb: e_u_3systems}, the maximum control input required by the shaped CgLp-PID system is reduced by 85.64\% compared to the CgLp-PID system. 
\begin{table}[!t] 
\captionsetup{singlelinecheck = false, format= hang, justification=justified, font=footnotesize, labelsep=space}
\caption{Maximum steady-state errors \(||e||_\infty\) [m] and maximum control input \(||u||_\infty\) [v] for the PID, CgLp-PID and shaped CgLp-PID control systems.}
\label{tb: e_u_3systems}
\centering
    \renewcommand{\arraystretch}{1.5} 
\fontsize{8pt}{8pt}\selectfont
\begin{tabular}{|c|c|c|}
\hline
Systems & $||e||_\infty$ [m] & $||u||_\infty$ [v]\\ \hline
PID & $1.7\times 10^{-7}$ & 0.35 \\ 
CgLp-PID & $1.4\times 10^{-7}$ & 1.88\\ 
Shaped CgLp-PID & $1.1\times 10^{-7}$ & 0.27\\ 
\hline
\end{tabular}
\end{table}

Results in Fig. \ref{fig:CScglp_Sn_Csn} show that the shaped CgLp-PID system not only improves steady-state accuracy but also reduces the actuation force, enhancing overall control efficiency.

\subsection{Discussions}
\label{case study}
The frequency response analysis methods in Theorems \ref{thm: open-loop HOSIDF} to \ref{thm: closed-loop HOSIDF}, developed for the generalized reset feedback control structure shown in Fig. \ref{fig1:RC system}, enable the tuning of linear elements in parallel, in series before and after the reset controller, and within the shaping filter to refine reset actions. This tuning flexibility broadens the potential for reset feedback control systems with enhanced performance. For example, ensuring that \({| \mathcal{S}_n(\omega)|}/{| \mathcal{S}_1(\omega)|} \to 0\) can preserve the advantages of first-order harmonics while suppressing higher-order harmonics to negligible levels. Striking this balance enables the application of the superposition principle in closed-loop, multiple-input reset control systems.

Furthermore, in Section \ref{case study}, we proposed a shaped reset control structure that enhances tracking accuracy while reducing actuation demands at the targeted frequency. Industrial mechatronics applications often face tracking challenges due to dominant frequencies or specific disturbances with notable spectral characteristics, such as friction, vibrations, actuator dynamics, and sensor noise \cite{iwasaki2012high, chow2012disturbance, gruntjens2019hybrid}. The proposed shaped reset control structure is well-suited to address these challenges. However, this study primarily focuses on frequency response analysis; future research will explore detailed parameter optimization and tuning for targeted frequencies across different applications.




\section{Conclusion}
\label{sec: conclusion}
In conclusion, this study develops frequency response analysis tools for generalized single reset-state reset control systems, which are integrated into a MATLAB App, making them more accessible and intuitive for industrial applications. The first tool introduces Higher-Order Sinusoidal Input Describing Functions (HOSIDFs) for open-loop reset control systems, while the second provides HOSIDFs for the frequency response analysis of closed-loop reset control systems with two reset instants per steady-state cycle. Simulations and experiments validate the accuracy of the proposed methods.

Case studies conducted on a precision motion stage demonstrate the effectiveness of the proposed method in analyzing reset control systems. The frequency-domain analysis results show that the shaped CgLp-PID system reduces high-order harmonics in both open-loop and closed-loop configurations while maintaining similar first-order harmonics compared to the CgLp-PID control system. These frequency-domain advantages allow the shaped CgLp-PID system to achieve lower steady-state errors and reduced actuation force compared to previous PID and CgLp-PID control systems. Future research will focus on further exploring the design and tuning of this shaped control structure to optimize system performance.

The proposed tool is developed based on the method applied to a limited reset control structure from \cite{ZHANG2024106063}, but this study goes beyond a simple extension. The developed analysis methods establish a connection between open-loop and closed-loop analysis, enabling the fine-tuning of critical parameters such as the shaping filter \(\mathcal{C}_s\), linear controllers \(\mathcal{C}_1\), \(\mathcal{C}_2\), and \(\mathcal{C}_3\), which are arranged in series before, after, and parallel to the reset controller \(\mathcal{C}_r\), along with the feedback element \(\mathcal{C}_4\). This tuning capability facilitates the optimization of performance in reset feedback control systems.

Moreover, future research could leverage the frequency-domain techniques presented in Theorems \ref{thm: open-loop HOSIDF} and \ref{thm: closed-loop HOSIDF} to develop guidelines for designing reset control systems. These guidelines could focus on optimizing both transient and steady-state responses in reset-controlled mechatronics systems.



\section*{CRediT authorship contribution statement}
\textbf{Xinxin Zhang}: Conceptualization, Methodology, Formal analysis, Data curation, Software, Validation, Conducting Simulation and experiments, Writing – original draft, Writing – review and editing. \textbf{S. Hassan HosseinNia}: Conceptualization, Supervision, Resources, Review, and Editing. 

\section*{Declaration of competing interest}
The authors declare that they have no known competing financial interests or personal relationships that could have appeared to influence the work reported in this paper.
          
\section*{Acknowledgement}
Xinxin Zhang acknowledges the PhD grant from the China Scholarship Council.
                 
\section*{Declaration of Generative AI and AI-assisted technologies in the writing process}
During the preparation of this work the authors used [ChatGPT] in order to improve readability and language. After using this tool, the authors reviewed and edited the content as needed and take full responsibility for the content of the publication.


\bibliographystyle{unsrt}
\bibliography{References}
\appendix

\section{The Proof of Theorem \ref{thm: open-loop HOSIDF}}
\label{Proof: thmCv}
\vspace*{12pt}
\begin{proof}
Consider an open-loop reset control system with an input \(e_o(t) = |E| \sin(\omega t + \angle E)\) and output \(y_o(t)\) as depicted in Fig. \ref{fig: open_loop_rcs}, satisfying Assumption \ref{assum: open-loop stability}. This proof derives the HOSIDFs for the reset controller \(\mathcal{C}_r\) and the open-loop system. The derivation process proceeds sequentially from the input signal \(e_o(t)\) on the left to the output signal \(y_o(t)\) on the right in Fig. \ref{fig: open_loop_rcs}.

First, the block \(\mathcal{C}_1\) receives the input \(e_o(t)\) and generates the output signal \(z_o(t)\). Let \(E_o(\omega)\) denotes the Fourier transform of \(e_o(t)\). Then, the output signal \(z_o(t)\) and its Fourier transform \(Z_o(\omega)\) are given by:
\begin{equation}
\label{eq: Wo(w)}
\begin{aligned}
 z_o(t) &=  |E\mathcal{C}_1(\omega)| \sin(\omega t + \angle E + \angle \mathcal{C}_1(\omega)),\\
Z_o(\omega) &= E_o(\omega)\mathcal{C}_1(\omega).   
\end{aligned}
\end{equation}
Next, the signal \( z_o(t) \) in \eqref{eq: Wo(w)} is filtered by the block \(\mathcal{C}_2\), producing the output signal \( a_o(t) \). The signal \( a_o(t) \) and its Fourier transform \( A_o(\omega) \) are given by the following equations:
\begin{equation}
\label{eq: Ao(w)0}
\begin{aligned}
a_o(t) &=  |E\mathcal{C}_1(\omega)\mathcal{C}_2(\omega)| \sin(\omega t + \angle E + \angle \mathcal{C}_1(\omega)+ \angle \mathcal{C}_2(\omega)),\\
A_o(\omega) &=Z_o(\omega)\mathcal{C}_2(\omega) = E_o(\omega)\mathcal{C}_1(\omega)\mathcal{C}_2(\omega).
\end{aligned}
\end{equation}
Meanwhile, the signal \(z_o(t)\) in \eqref{eq: Wo(w)} and the reset trigger signal \(z_s(t)\) are processed by the reset controller block \(\mathcal{C}_r\), producing the nonlinear output signal \(m_o(t)\). The reset trigger signal is given by:
\[
z_s(t) = |E \mathcal{C}_1(\omega) \mathcal{C}_s(\omega)| \sin\big(\omega t + \angle E + \angle \mathcal{C}_1(\omega) + \angle \mathcal{C}_s(\omega)\big).
\]
To conduct the HOSIDF analysis for a nonlinear system, the ``Virtual Harmonics Generator” \cite{nuij2006higher} is applied to decompose the input signal \(z_o(t)\) into its harmonic components \(z_o^n(t)\), expressed as:
\begin{equation}
\label{eq: won(t)_def}
    z_o^n(t) = |E\mathcal{C}_1(\omega)| \sin(n\omega t + n\angle E + n\angle \mathcal{C}_1(\omega)),
\end{equation}
where $z_o^1(t) = z_o(t)$. The Fourier transform of $z_o^n(t)$ is given by
\begin{equation}
\label{eq: w_on(t)2}
Z_o^n(\omega) = E_o(\omega)\mathcal{C}_1(\omega)e^{j(n-1)(\angle\mathcal{C}_1(\omega)+\angle E)}.    
\end{equation}
Then, define \(Z_o(\omega)\) and \(M_o(\omega)\) as the Fourier transforms of \(z_o(t)\) and $m_o(t)$, respectively. Referring \cite{ZHANG2024106063}, $M_o(\omega)$ is given by
\begin{equation}
\label{eq: mo(w)=ml+mnl}
\begin{aligned}
M_o(\omega) &= M_{l}(\omega) + M_{\rho}(\omega),
\end{aligned} 
\end{equation}
where
\begin{equation}
\label{eq: ml(w)+mnl(w)}
\begin{aligned}
M_{l}(\omega) &= Z_o(\omega)\mathcal{C}_{l}(\omega),\\
Q^n(\omega) &= 2Z_o^n(\omega)\Delta_q(\omega)e^{jn\angle \mathcal{C}_s(\omega)}/(n\pi),\\
M_{\rho}(\omega) &= \sum\nolimits_{n=1}^{\infty} \Delta_x(n\omega)Q^n(\omega), n=2k+1,k\in\mathbb{N}.
\end{aligned} 
\end{equation}
From \eqref{eq: mo(w)=ml+mnl}, $M_o(\omega)$ can be expressed as the sum of its harmonics, expressed as
\begin{equation}
\label{eq: mo= m0n}
M_o(\omega) =  \sum\nolimits_{n=1}^{\infty} M_o^n(\omega),
\end{equation}
where
\begin{equation}
\label{eq: m_on(w)}
M_o^n(\omega)=
\begin{cases}
   M_{l}(\omega) + \Delta_x(\omega)Q^1(\omega), & \text{for } n=1,\\
    \Delta_x(n\omega)Q^n(\omega), & \text{for odd } n>1,\\
   0, & \text{for even } n \geq 2.
    \end{cases}
\end{equation}
From \eqref{eq: mo(w)=ml+mnl} and \eqref{eq: m_on(w)}, the HOSIDF of the reset controller \(\mathcal{C}_r\), describing the transfer function from \(Z_o^n(\omega)\) to \(M_o^n(\omega)\), is given by
\begin{equation}
\label{eq: Cr_hn} 
\begin{aligned}
\mathcal{C}_r^n(\omega) = \frac{M_o^n(\omega)}{Z_o^n(\omega)}=
\begin{cases}
        \mathcal{C}_{l}(\omega) + \mathcal{C}_{\rho}^1(\omega), & \text{for odd }n=1 \\
	\mathcal{C}_{\rho}^n(\omega), & \text{for odd }n>1,\\
		0,&\text{for even }n\geqslant 2,
  \end{cases}    
\end{aligned}
\end{equation}
where 
\begin{equation}
    \begin{aligned}
    \mathcal{C}_{\rho}^n(\omega) &=  2\Delta_x(n\omega)\Delta_q(\omega) e^{jn\angle \mathcal{C}_s(\omega)}/(n\pi).
    \end{aligned}
\end{equation}

The subsequent content builds upon the HOSIDF for the reset controller \(\mathcal{C}_r\) in \eqref{eq: Cr_hn} to derive the HOSIDF in \eqref{eq: H_hn} for the open-loop reset system shown in Fig. \ref{fig: open_loop_rcs}. This derivation follows the same logical framework used in \cite{van2024higher}, but to accommodate the reset control structure presented in this study and ensure the completeness of the proof for \eqref{eq: H_hn}, the following content is provided.


From Fig. \ref{fig: open_loop_rcs}, the signal $v_o(t)$ is given by
\begin{equation}
\label{eq: vo(t)}
v_o(t) = a_o(t) + m_o(t).
\end{equation}

Define $V_o(\omega)$ as the Fourier transform of $v_o(t)$. Then, from \eqref{eq: Ao(w)0}, \eqref{eq: mo= m0n}, \eqref{eq: Cr_hn}, and \eqref{eq: vo(t)}, $V_o(\omega)$ is given by
\begin{equation}
\label{eq: Vo(w)}
\begin{aligned}
 V_o(\omega) &= A_o(\omega) + M_o(\omega)
 = E_o(\omega)\mathcal{C}_1(\omega)\mathcal{C}_2(\omega) + \sum\nolimits_{n=1}^{\infty} Z_o^n(\omega) \mathcal{C}_r^n(\omega).
\end{aligned}
\end{equation}
Substituting $Z_o^n(\omega)$ from \eqref{eq: w_on(t)2} into \eqref{eq: Vo(w)}, $V_o(\omega)$ is given by
\begin{equation}
\label{eq: Vo(w)2}
\begin{aligned}
V_o(\omega) = E_o(\omega)\mathcal{C}_1(\omega)\mathcal{C}_2(\omega) + \sum\nolimits_{n=1}^{\infty} E_o(\omega)\mathcal{C}_1(\omega)
e^{j(n-1)(\angle\mathcal{C}_1(\omega)+\angle E)} \mathcal{C}_r^n(\omega).    
\end{aligned}
\end{equation}
From \eqref{eq: Vo(w)2}, \(V_o(\omega)\) can be written as the sum of its harmonics, denoted by \(V_o^n(\omega)\), expressed as:
\begin{equation}
\label{eq: Vo(w)3}
V_o(\omega) = \sum\nolimits_{n=1}^{\infty} V_o^n(\omega),
\end{equation}
where
\begin{equation}
\label{eq: v_on}
\begin{aligned}
& V_o^n(\omega) =
\begin{cases}
   E_o(\omega)\mathcal{C}_1(\omega)[\mathcal{C}_{r}^1(\omega) + \mathcal{C}_2(\omega)], & \text{ for } n=1,\\
   E_o(\omega)\mathcal{C}_1(\omega)e^{j(n-1)(\angle\mathcal{C}_1(\omega)+\angle E)} \mathcal{C}_r^n(\omega), & \text{ for odd } n>1,\\
   0, & \text{ for even } n \geq 2.
    \end{cases}
\end{aligned}
\end{equation}
The output signal \( y_o(t) \) of the open-loop system shown in Fig. \ref{fig: open_loop_rcs} exhibits nonlinear behavior and comprises \( n \) harmonic components, defined as \( y_o(t) = \sum\nolimits_{n=1}^{\infty} y_o^n(t) \). Let \(Y_o(\omega)\) and \(Y_o^n(\omega)\) denote the Fourier transforms of \(y_o(t)\) and \(y_o^n(t)\), respectively. Based on Fig. \ref{fig: open_loop_rcs}, $Y_o(\omega)$ and $Y_o^n(\omega)$ are given by
\begin{equation}
\label{eq: yno,vno}
\begin{aligned}
  Y_o(\omega) &= \sum\nolimits_{n=1}^{\infty} Y_o^n(\omega),\\
  Y_o^n(\omega) &= V_o^n(\omega) \mathcal{C}_{3}(n\omega)\mathcal{P}(n\omega).
\end{aligned}
\end{equation}
By employing the \enquote{Virtual Harmonics Generator} \cite{nuij2006higher}, the input signal $e_o(t) = |E|\sin(\omega t+ \angle E)$ generates $n$ harmonics given by
\begin{equation}
\label{eq: en(t)}
e_o^n(t) = |E|\sin(n\omega t+ n\angle E).
\end{equation}
Define $E_o(w)$ and $E_o^n(w)$ as the Fourier transforms of $e_o(t)$ and $e_o^n(t)$, respectively. From \eqref{eq: en(t)}, $E_o^n(\omega)$ is given by 
\begin{equation}
\label{eq: en(w)}
E_o^n(w) = E_o(w)e^{j (n-1)\angle E}.
\end{equation}
Thus, from \eqref{eq: Cr_hn}, \eqref{eq: v_on}, \eqref{eq: yno,vno}, and \eqref{eq: en(w)}, the \(n\)-th transfer function \(\mathcal{L}_n(\omega)\) for the open-loop reset control system is given by
\begin{equation}
\label{Hn_def}
\mathcal{L}_n(\omega) = \frac{Y_o^n(\omega)}{E_o^n(\omega)} 
=\begin{cases}
\mathcal{C}_1(\omega)[\mathcal{C}_l(\omega) +\mathcal{C}_{\rho}^1(\omega) +  \mathcal{C}_2(\omega)]\mathcal{C}_{3}(\omega)\mathcal{P}(\omega), & \text{for } n=1, \\
\mathcal{C}_1(\omega)e^{j(n-1)\angle\mathcal{C}_1(\omega)}\mathcal{C}_{\rho}^n(\omega)\mathcal{C}_{3}(n\omega)\mathcal{P}(n\omega), & \text{for odd }n>1,\\
0,&\text{for even }n\geqslant 2.
  \end{cases}  
\end{equation}
Finally, equations \eqref{eq: Cr_hn} and \eqref{Hn_def} complete the proof.
\end{proof}
\section{The Proof of Theorem \ref{thm: closed-loop HOSIDF}}
\label{Proof for Theorem closed-loop Sen}
\vspace*{12pt}
\begin{proof}
Consider a closed-loop reset control system, as shown in Fig. \ref{fig1:RC system}, with a sinusoidal reference input signal \( r(t) = |R|\sin(\omega t) \), under Assumptions \ref{assum: closed-loop stability} and \ref{assum:2reset}. This proof derives the HOSIDFs for the closed-loop system in three steps. The approach follows a similar process to the proof in \cite{ZHANG2024106063}, but while the latter is limited to a system with \(\mathcal{C}_1=\mathcal{C}_s = \mathcal{C}_3 = \mathcal{C}_4 = 1\) and \(\mathcal{C}_2 = 0\), this proof is adapted to a more generalized structure as depicted in Fig. \ref{fig1:RC system}.\\
\textbf{Step 1: Construct the Block Diagram for Sinusoidal-Input Closed-Loop Reset Control System as Shown in Fig. \ref{fig: CL_Block_Diagram_Harmonics_a}, Decomposing \(v(t)\) into Its \(n\)-th Harmonic Components, Expressed as \(v(t) = \sum\nolimits_{n=1}^{\infty} v^n(t)\).}

\begin{figure}[htp]
    \centering
    \includegraphics[width=0.75\columnwidth]{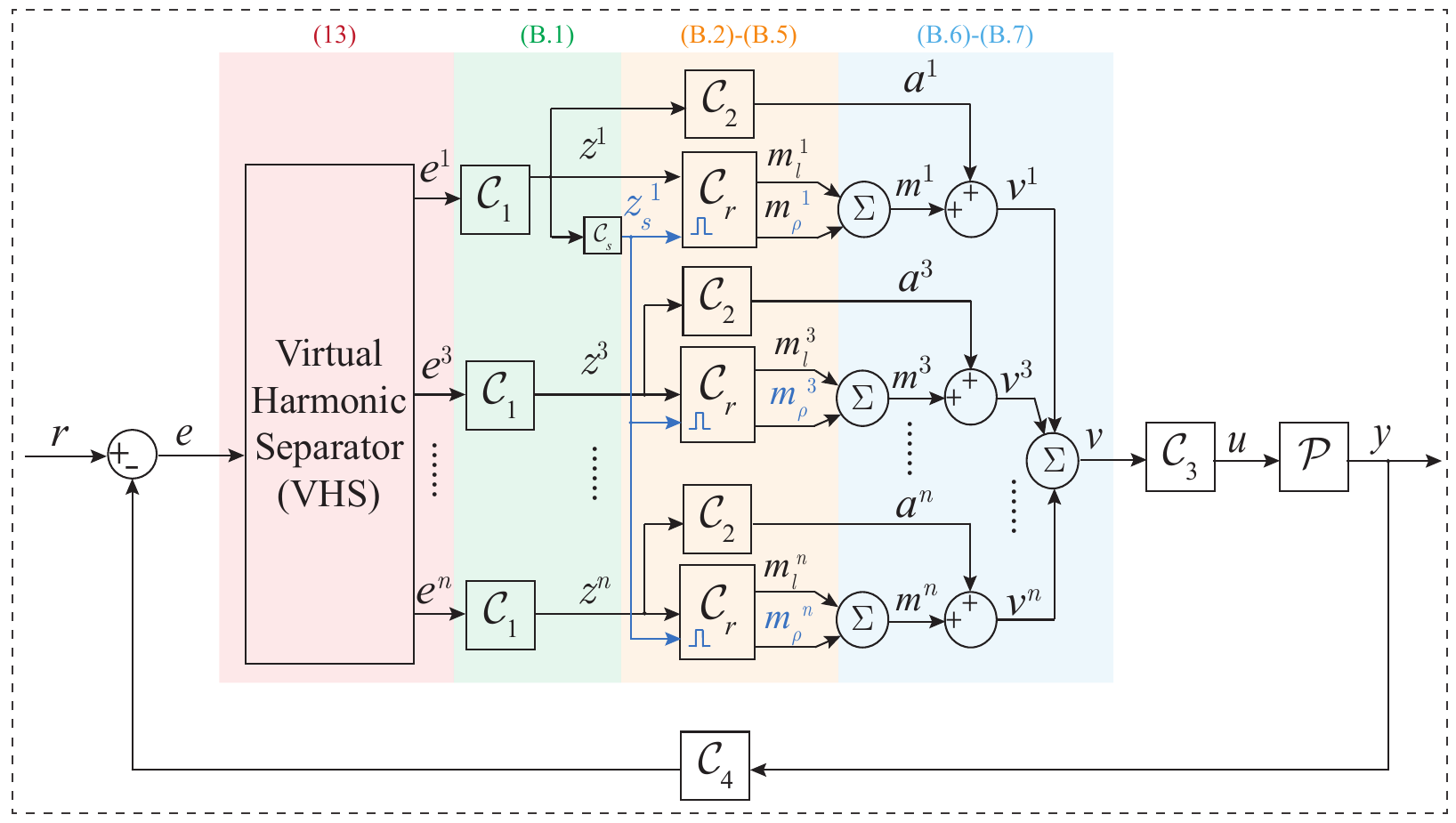}
    \caption{Block diagram of the closed-loop reset control system, showing the decomposition of \(v(t)\) into its \(n\)-th harmonic components, expressed as \(v(t) = \sum\nolimits_{n=1}^{\infty} v^n(t)\). The colored blocks correspond to the equations with matching colors.}
	\label{fig: CL_Block_Diagram_Harmonics_a}
\end{figure}
Figure \ref{fig: CL_Block_Diagram_Harmonics_a} illustrates the signal flow from the error signal \( e(t) \) to the reset controller's output \( v(t) \) in Fig. \ref{fig1:RC system}. First, the error signal \( e(t) \) passes through the ``Virtual Harmonic Separator" and \(\mathcal{C}_1\), resulting in the signal \( z^n(t) \). Subsequently, \( z^n(t) \) is processed by \(\mathcal{C}_2\) to generate \( a^n(t) \), and by \(\mathcal{C}_r\) to produce \( m^n(t) \). The signals \( a^n(t) \) and \( m^n(t) \) are then combined to form \( v^n(t) \), and summing all \( v^n(t) \) components produces the overall output \( v(t) \).

Due to the nonlinearity of the reset control system, the signal \( e(t) \) contains an infinite series of harmonics \( e^n(t) \) for \( n = 2k + 1, \, k \in \mathbb{N} \), as formulated in \eqref{eq: e,y,u}. Utilizing the ``Virtual Harmonic Separator," each harmonic component \( e^n(t) \) is individually separated, as illustrated in Fig. \ref{fig: CL_Block_Diagram_Harmonics_a}. Subsequently, each separated harmonic \( e^n(t) \) is filtered through the linear transfer function \( \mathcal{C}_1(n\omega) \), producing the corresponding output signal \( z^n(t) \) given by
\begin{equation}
\label{eq: wn(t)1}
    \begin{aligned}
z^n(t) &= |Z^n |\sin(n\omega t+ \angle  Z^n), \\
        |Z^n | &= |E^n\mathcal{C}_{1}(n\omega)|,\\
        \angle Z^n &= \angle E^n + \angle \mathcal{C}_{1}(n\omega).
    \end{aligned}
\end{equation}
Signals \(z^n(t)\), processed through the blocks \(\mathcal{C}_r\) and \(\mathcal{C}_2\), produce outputs \(m^n(t)\) and \(a^n(t)\), respectively. We first derive the expression for \(m^n(t)\).

The reset controller \(\mathcal{C}_r\) processes the input signal \(z^n(t)\) from \eqref{eq: wn(t)1} and the reset trigger signal \(z_s^1(t)\), given by
\begin{equation}
\label{eq: z_s1(t)}
    \begin{aligned}
z_s^1(t) &= |Z_s^1|\sin(\omega t+\angle Z_s^1),\\ 
|Z_s^1| &= |{Z}^1(\omega)|\cdot|\mathcal{C}_{s}(\omega)|= |E^1\mathcal{C}_{1}(\omega)\mathcal{C}_{s}(\omega)|,\\
\angle Z_s^1&=  \angle Z^1 + \angle \mathcal{C}_{s}(\omega) = \angle E^1 + \angle \mathcal{C}_{1}(\omega)+ \angle \mathcal{C}_{s}(\omega).
    \end{aligned}
\end{equation}
Referring \cite{ZHANG2024106063}, a reset controller $\mathcal{C}_r$ with the input signal $z^n(t)$ in \eqref{eq: wn(t)1} and the reset triggered signal $z_s^1(t)$ in \eqref{eq: z_s1(t)} generates output $m^n(t)$ given by
\begin{equation}
\label{eq: mn(t)}
    m^n(t) = m^n_{l}(t) + m^n_{\rho}(t),
\end{equation}
where $m^n_{l}(t)$ is the linear component, given by
\begin{equation}
\label{eq: mn_{l}(t)}
\begin{aligned}
 m^n_{l}(t) & = |M^n_{l}|\sin(n\omega t+ \angle M^n_{l}), \\
 |M^n_{l}| &= |Z^n\mathcal{C}_{l}(n\omega)|= |E^n\mathcal{C}_{1}(n\omega)\mathcal{C}_{l}(n\omega)|,\\
 \angle M^n_{l} &= \angle Z^n+\angle \mathcal{C}_{l}(n\omega)=\angle E^n + \angle \mathcal{C}_{1}(n\omega)+\angle \mathcal{C}_{l}(n\omega),
\end{aligned}  
\end{equation}
and $m^n_{\rho}(t)$ is the nonlinear component, given by
\begin{equation}
\label{eq: m_rho_{l}(t)}
\begin{aligned}
m_{\rho}^n(t) &= \sum\nolimits_{\eta=1}^{\infty}\mathscr{F}^{-1}[\Delta_x(\eta\omega)Q_\eta(\omega)],\ \eta=2k+1,\ k\in\mathbb{N},\\
\Delta_q^n(\omega) &= (I+e^{A_R\pi/\omega})(A_\rho e^{A_R\pi/\omega}+I)^{-1}(I-A_\rho)\Delta_c^n(\omega),\\ 
Q_\eta(\omega) &= 2|Z^n|\Delta_q^n(\omega) \mathscr{F}[\sin(\eta\omega t + \eta\angle Z_s^1)]/(\eta\pi),\\
\Delta_c^n(\omega) &= |\Delta_l(n\omega)|\sin(\angle\Delta_l(n\omega)+ \angle Z^n-n\angle Z_s^1),\\
\Delta_x(\eta\omega) &= C_R(j\eta\omega I-A_R)^{-1}j\eta\omega I,\\
\Delta_l(n\omega) &= (jn\omega I-A_R)^{-1}B_R .       
    \end{aligned}
\end{equation}
Meanwhile, the LTI system \( \mathcal{C}_2\) processes the input signal \( z^n(t) \) in \eqref{eq: wn(t)1} and produces the output \( a^n(t) \), given by
\begin{equation}
\label{eq: an(t)}
\begin{aligned}
a^n(t) &= |A^n|\sin(n\omega t+ \angle A^n), \\ 
|A^n| &= |Z^n|\cdot|\mathcal{C}_{2}(n\omega)| = |E^n\mathcal{C}_{1}(n\omega)\mathcal{C}_{2}(n\omega)|,\\
\angle A^n &=  \angle Z^n+ \angle \mathcal{C}_{2}(n\omega) = \angle E^n + \angle \mathcal{C}_{1}(n\omega)+ \angle \mathcal{C}_{2}(n\omega).  
\end{aligned}
\end{equation}
By summing \( m^n(t) \) from \eqref{eq: mn(t)} and \( a^n(t) \) from \eqref{eq: an(t)}, the signal \( v(t) \) in Fig. \ref{fig: CL_Block_Diagram_Harmonics_a}, is obtained as follows:
\begin{equation}
\label{eq: vt_1}
\begin{aligned}
v(t) &=  \sum\nolimits_{n=1}^{\infty} v^n(t), \\
v^n(t) &=  m^n(t) +  a^n(t).
\end{aligned}     
\end{equation}
Here, the block diagram in Fig. \ref{fig: CL_Block_Diagram_Harmonics_a} is constructed. Next, based on Fig. \ref{fig: CL_Block_Diagram_Harmonics_a}, the block diagram in Fig. \ref{fig: CL_Block_Diagram_Harmonics_b} is developed.\\
\textbf{Step 2: Build the Block Diagram for Analyzing the Sinusoidal-Input Closed-Loop Reset Control System as Shown in Fig. \ref{fig: CL_Block_Diagram_Harmonics_b}, Decomposing \(v(t)\) into a Linear Component \(v_l(t)\) and a Nonlinear Component \(v_\rho(t)\).}

\begin{figure}[htp]
    \centering
    \includegraphics[width=0.75\columnwidth]{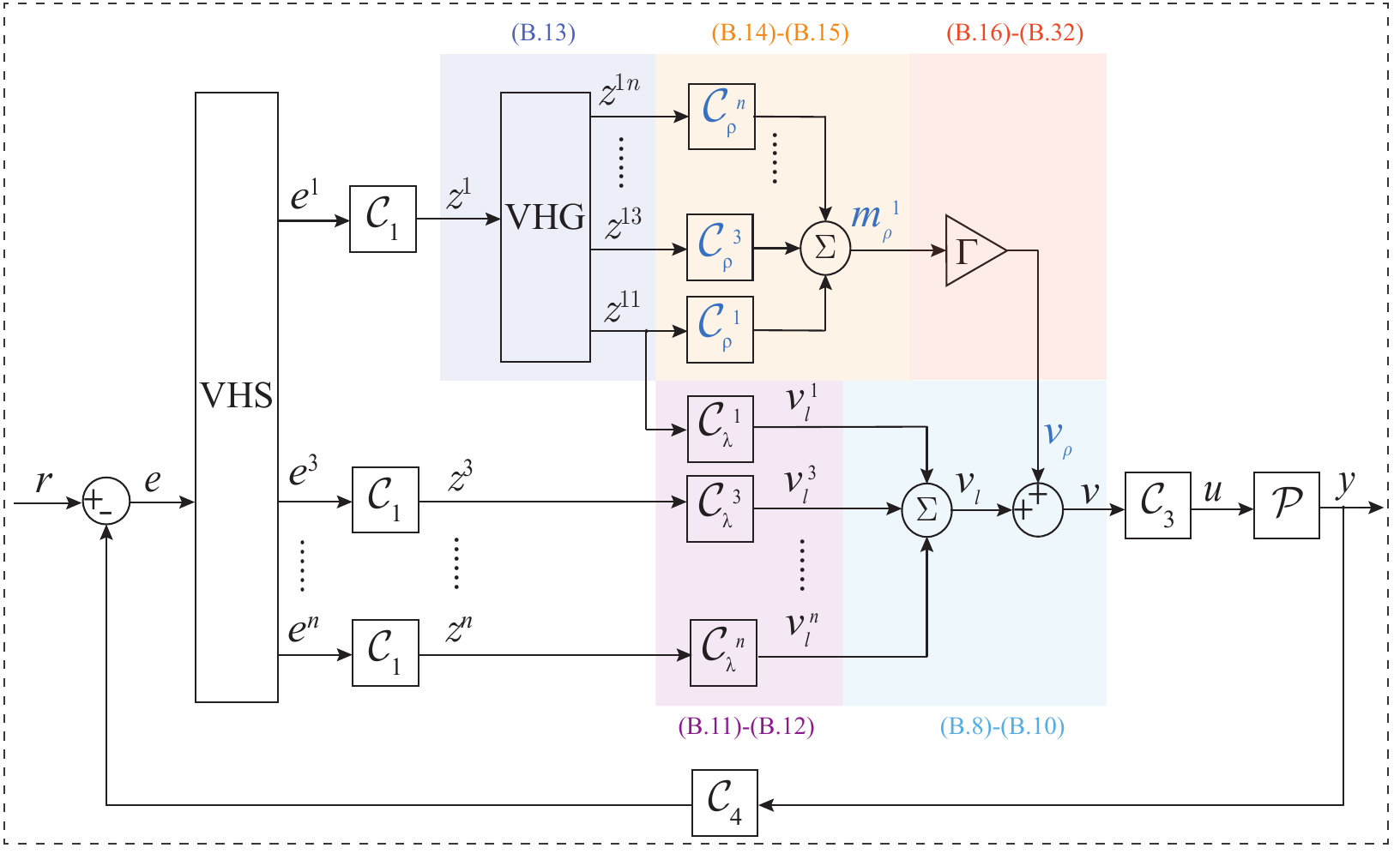}
    \caption{Block diagram of the closed-loop reset control system, illustrating the decomposition of \(v(t)\) into its linear component \(v_l(t)\) and nonlinear component \(v_\rho(t)\).}
	\label{fig: CL_Block_Diagram_Harmonics_b}
\end{figure}
From \eqref{eq: mn(t)} and \eqref{eq: vt_1}, $v^n(t)$ is given by
\begin{equation}
\label{eq: v^n2}
v^n(t) =  m_l^n(t) + m_\rho^n(t) +  a^n(t).    
\end{equation}
From \eqref{eq: vt_1} and \eqref{eq: v^n2}, signal $v(t)$ can be expressed as\begin{equation}
\label{eq: vln(t)1}
v(t) =  v_l(t) +  v_\rho(t),
\end{equation}
where
\begin{equation}
\label{eq: v_ln(t)}
\begin{aligned}
v_\rho(t) &= \sum\nolimits_{n=1}^{\infty} m_\rho^n(t),\\
v_l(t) &= \sum\nolimits_{n=1}^{\infty} v_l^n(t),\\
v_l^n(t) &=  m_l^n(t) +  a^n(t).    
\end{aligned}
\end{equation}
Define $Z^n(\omega)$, $V(\omega)$, $V^n(\omega)$, $V_l(\omega)$, $V_\rho(\omega)$, $M_\rho^n(\omega)$, $M_l^n(\omega)$, and $A^n(\omega)$ as the Fourier transforms of $z^n(t)$, $v(t)$, $v^n(t)$, $v_l(t)$, $v_\rho(t)$, $m_\rho^n(t)$, $m_l^n(t)$, and $a^n(t)$, respectively. 
From \eqref{eq: mn_{l}(t)}, \eqref{eq: an(t)}, and \eqref{eq: v_ln(t)}, $V_l(\omega)$ is expressed as
\begin{equation}
V_l(\omega) = \sum\nolimits_{n=1}^{\infty} V_l^n(\omega),
\end{equation}
where 
\begin{equation}
\label{eq: v_ln(w)}
\begin{aligned}
V_l^n(\omega) &= Z^n(\omega) \mathcal{C}_\lambda^n(\omega),\\
 \mathcal{C}_\lambda^n(\omega) &= \mathcal{C}_l(n\omega) + \mathcal{C}_2(n\omega).  
\end{aligned} 
\end{equation}
From \eqref{eq: vln(t)1}, we have \( V(\omega) = V_l(\omega) + V_\rho(\omega) \). The function \( V_l(\omega) \) is derived in \eqref{eq: v_ln(w)}. The following content derives \( V_\rho(\omega) \).

Consider the reset controller $\mathcal{C}_r$ in Fig. \ref{fig: CL_Block_Diagram_Harmonics_a} with the input signal $z^1(t) = |Z^1|\sin(\omega t+ \angle Z^1)$ and the reset triggered signal $z_s^1(t) = |Z^1\mathcal{C}_s(\omega)|\sin(\omega t+ \angle Z^1 +\angle \mathcal{C}_s(\omega))$. Using the ``Virtual Harmonics Generator", the input signal $z^1(t)$ generates $n$ harmonics, defined as
\begin{equation}
\label{eq:z1n_def}
 z^{1n}(t) =   |Z^1|\sin(n\omega t+ n\angle Z^1), 
\end{equation}
whose Fourier transform is $Z^{1n}(\omega) = \mathscr{F}[z^{1n}(t)]$. 

According to Theorem \ref{thm: open-loop HOSIDF}, each harmonic $ z^{1n}(t)$ filtered through the reset controller $\mathcal{C}_r$ generates two components: $m_l^1(t)$ in \eqref{eq: mn_{l}(t)} and the nonlinear output signal $m^1_{\rho}(t)$, which is given by
\begin{equation}
\begin{aligned}
\label{eq: w_rho1(t), w_1n(t)}
m_\rho^1(t) &= \sum\nolimits_{n=1}^{\infty}|Z^1\mathcal{C}_\rho^n(\omega)|\sin(n\omega t+ n\angle Z^1 + \angle\mathcal{C}_\rho^n(\omega)).
\end{aligned}  
\end{equation}
From \eqref{eq: w_rho1(t), w_1n(t)}, the Fourier transform of $m_\rho^1(t)$ is given by
\begin{equation}
\begin{aligned}
\label{eq: w_rho1(w), w_1n(w)}
M_\rho^1(\omega) &= \sum\nolimits_{n=1}^{\infty}Z^{1n}(\omega)\mathcal{C}_\rho^n(n\omega).
\end{aligned}      
\end{equation}
As shown in Fig. \ref{fig: CL_Block_Diagram_Harmonics_a}, the reset controller \( \mathcal{C}_r \) processes distinct input signals \( z^n(t) \) while relying on the same reset-triggered signal \( z_s^1(t) \). Then, define \( v_\rho(t) = \sum\nolimits_{n=1}^{\infty} m^n_{\rho}(t) \) as the summation of nonlinear components \( m^n_{\rho}(t) \). From \eqref{eq: m_rho_{l}(t)}, \( V_\rho(\omega) = \mathscr{F}[v_\rho(t)] \) is given by:
\begin{equation}
\label{eq: M_rho_n(w)}
    \begin{aligned}
     V_{\rho}(\omega) &=   \sum\nolimits_{n=1}^{\infty}M^{n}_{\rho}(\omega), \\
     M^{n}_{\rho}(\omega)&= \sum\nolimits_{\eta=1}^{\infty} \Delta_x(\eta\omega)Q_\eta(\omega),
    \end{aligned}
\end{equation}
where $\Delta_x(\eta\omega)$ and $Q_\eta(\omega)$ are given in \eqref{eq: m_rho_{l}(t)}. 

Then, a factor \(\Gamma(\omega)\) is introduced to represent the ratio of \(V_{\rho}(\omega)\) to \(M^1_{\rho}(\omega)\), defined as:
\begin{equation}
\label{eq: G0}
\Gamma(\omega)=\frac{V_{\rho}(\omega)}{M^{1}_{\rho}(\omega)} =\frac{\sum\nolimits_{n=1}^{\infty}M^{n}_{\rho}(\omega)}{M^{1}_{\rho}(\omega)}, \text{ where }n=2k+1,k\in\mathbb{N}.
\end{equation}
According to \eqref{eq: m_rho_{l}(t)}, the nonlinear components \( m^n_{\rho}(t) \) for different \(n\) share identical phase and period. Therefore, $\Gamma(\omega)$ is a real number.

From \eqref{eq: M_rho_n(w)} and \eqref{eq: G0}, $\Gamma(\omega)$ is expressed as
\begin{equation}
\label{eq: G2}
\Gamma(\omega)=\frac{\sum\nolimits_{n=1}^{\infty}|Z^n|\Delta_c^n(\omega)}{|Z^1|\Delta_c^1(\omega)},
\end{equation}
where $\Delta_c^n(\omega)$ from \eqref{eq: m_rho_{l}(t)} is given by
\begin{equation}
\label{Delta_cn2}
\Delta_c^n(\omega) =|\Delta_l(n\omega)| \sin(\angle \Delta_l(n\omega)+ \angle Z^n -n\angle Z_s^1 ).
\end{equation}
However, in \eqref{eq: G2} and \eqref{Delta_cn2}, the ratio \( {|Z^n|}/{|Z^1|} \) and the value of \( (\angle Z^n - n\angle Z_s^1) \) remain undetermined. Therefore, the subsequent analysis addresses these unknown parameters by leveraging the underlying system dynamics and harmonic relationships, enabling the determination of \( \Gamma(\omega) \).

From \eqref{eq: w_rho1(w), w_1n(w)} and \eqref{eq: G0}, $V_{\rho}(\omega)$ can be expressed as
\begin{equation}
\label{eq: vnl}
    \begin{aligned}
V_{\rho}(\omega) &= \sum\nolimits_{n=1}^{\infty}\Gamma(\omega)Z^{1n}(\omega)\mathcal{C}_{\rho}^n(\omega). 
    \end{aligned}
\end{equation}
Let the \( n \)th harmonic component in \( V_\rho(\omega) \) from \eqref{eq: M_rho_n(w)} and \eqref{eq: vnl} be equal to each other. Then, \( M_{\rho}^n(\omega) \) is determined as
\begin{equation}
\label{eq: vnl_n}
M_{\rho}^n(\omega) = \Gamma(\omega)Z^{1n}(\omega)\mathcal{C}_{\rho}^n(\omega).  
\end{equation}
From \eqref{eq: vln(t)1}, \eqref{eq: v_ln(w)}, and \eqref{eq: vnl_n}, $V^n(\omega)$ is given by
\begin{equation}
\label{eq: Un}
\begin{aligned}
V^n(\omega) = Z^n(\omega)\mathcal{C}_\lambda^n(\omega) + \Gamma(\omega)Z^{1n}(\omega)\mathcal{C}_{\rho}^n(\omega).
\end{aligned}
\end{equation}
In the closed-loop system in Fig. \ref{fig: CL_Block_Diagram_Harmonics_a}, $Z^n(\omega)$ is given by
\begin{equation}
\label{eq:E_n}
    Z^n(\omega) = - V^n(\omega)\mathcal{C}_3(n\omega)\mathcal{P}(n\omega)\mathcal{C}_4(n\omega)\mathcal{C}_1(n\omega).
\end{equation}
Substituting \eqref{eq: Un} into \eqref{eq:E_n}, we have
\begin{equation}
\label{eq: En2}
    Z^n(\omega) = - Z^n(\omega)\mathcal{L}_{l}(n\omega) - \Gamma(\omega)Z^{1n}(\omega)\mathcal{L}_{\rho}(n\omega),
\end{equation}
where
\begin{equation}
\label{eq:Lbl,Lnl}
    \begin{aligned}
    \mathcal{L}_{l}(n\omega) &= {\mathcal{C}_\lambda^n(\omega)\mathcal{C}_{3}(n\omega)\mathcal{P}(n\omega)\mathcal{C}_{4}(n\omega)\mathcal{C}_{1}(n\omega)},\\
    \mathcal{L}_{\rho}(n\omega) &= \mathcal{C}_{\rho}^n(\omega)\mathcal{C}_3(n\omega)\mathcal{P}(n\omega)\mathcal{C}_4(n\omega)\mathcal{C}_1(n\omega).
    \end{aligned}
\end{equation}
From \eqref{eq: wn(t)1}, \eqref{eq:z1n_def}, the relation between $Z^{1n}(\omega)$ and $Z^n(\omega)$ is given by
\begin{equation}
\label{eq:z_1n and zn}
Z^{1n}(\omega) = \frac{|Z^1(\omega)|  e^{j(n\angle Z^1(\omega)-\angle Z^{n}(\omega))}}{|Z^{n}(\omega)|}Z^{n}(\omega).   
\end{equation}
From \eqref{eq: En2} and \eqref{eq:z_1n and zn}, the following equations can be deduced:
\begin{equation}
\label{eq: phase and Mag2}
\begin{cases}
|Z^n| =  \Gamma(\omega)\Psi_n(\omega)|Z^1|, &\text{ for } n>1\\
\angle Z^n= n\pi+ n\angle Z^1 + \angle\mathcal{L}_{\rho}(n\omega) - \angle (1+\mathcal{L}_{l}(n\omega)),  &\text{ for } n>1,
\end{cases}
\end{equation}
where
\begin{equation}
\label{eta}
    \begin{aligned}        
	\Psi_n(\omega) &= {|\mathcal{L}_{\rho}(n\omega)|}/{|1+\mathcal{L}_{l}(n\omega)|}.
    \end{aligned}
\end{equation}
By substituting the phase relationship between \( \angle Z^1 \) and \( \angle Z^n \) (\(n > 1\)) from \eqref{eq: phase and Mag2} and \( \angle Z_s^1 = \angle Z^1 + \angle \mathcal{C}_s(\omega) \) from \eqref{eq: z_s1(t)} into \eqref{Delta_cn2}, we derive:
\begin{enumerate}
    \item For $n=1$,\\
    \begin{equation}
    \label{Delta_cn31}
     \Delta_c^1(\omega) =   |\Delta_l(\omega)| \sin(\angle \Delta_l(\omega)-\angle \mathcal{C}_s(\omega)).
     \end{equation}
     \item  For $n=2k+1>1$,\\
     \begin{equation}
    \label{Delta_cn32}
     \begin{aligned}
     \Delta_c^n(\omega)& = |\Delta_l(n\omega)|\sin(\angle \Delta_l(\omega)+ \angle Z^n -\angle Z_s^1)
     \\ &
     =-|\Delta_l(n\omega)|\sin(\angle \Delta_l(n\omega)+\angle\mathcal{L}_{\rho}(n\omega)- \angle (1+\mathcal{L}_{l}(n\omega))-n\angle \mathcal{C}_s(\omega)). 
     \end{aligned}
     \end{equation}
\end{enumerate}
Substituting $\Delta_c^n(\omega)$ from \eqref{Delta_cn31} and \eqref{Delta_cn32} into \eqref{eq: G2}, we obtain
\begin{equation}
\label{eq: G4}
\Gamma(\omega)=1+\frac{\sum\nolimits_{n=3}^{\infty}|Z^n|\Delta_c^n(\omega)}{|Z^1|\Delta_c^1(\omega)}.
\end{equation}
Then, by substituting $|Z^n| =  \Gamma(\omega)\Psi_n(\omega)|Z^1|$ from \eqref{eq: phase and Mag2} into \eqref{eq: G4}, $\Gamma(\omega)$ is derived as
\begin{equation}
		\label{eq: gamma_final}		
		\begin{aligned}
		  \Gamma(\omega) &= 1/\left(1-{\sum\nolimits_{n=3}^{\infty}\Psi_n(\omega)\Delta_c^n(\omega)}/{\Delta_c^1(\omega)}\right), n = 2k+1, k\in\mathbb{N}.
		\end{aligned}
	\end{equation}
Up to this point, the block diagram of the closed-loop reset control system in Fig. \ref{fig: CL_Block_Diagram_Harmonics_b} has been constructed. Based on this, the next step derives the HOSIDFs for the closed-loop reset control systems. \\
\textbf{Step 3: Derive the HOSIDFs for Closed-Loop Reset Control Systems.}

By applying the ``Virtual Harmonics Generator", the input signal $r(t)=|R|\sin(\omega t)$ generates $n$ harmonics, defined as
\begin{equation}
\label{eq:rn(t)_def}
    r^n(t) = |R|\sin(n\omega t),
\end{equation}
whose Fourier transform is denoted as $R^n(\omega)$.

The output signal $y(t)$ of the closed-loop reset control system includes infinite many harmonics $y_n(t)$, as defined in \eqref{eq: e,y,u}. Define $Y^n(\omega) = \mathscr{F}[y_n(t)]$. From the block diagram in Fig. \ref{fig: CL_Block_Diagram_Harmonics_b}, we have
     \begin{equation}
     \label{eq:Y1(w)}
         Y^1(\omega) = E^1(\omega)[\mathcal{L}_{l}(\omega) + \Gamma(\omega)\mathcal{L}_{\rho}(\omega)]/\mathcal{C}_{4}(\omega).
     \end{equation}
In the closed loop, the following relation holds:
\begin{equation}
\label{eq:E1(w)}
    E^{1}(\omega) = R^1(\omega)-\mathcal{C}_{4}(\omega)Y^{1}(\omega).
\end{equation}
Combining \eqref{eq:Y1(w)} and \eqref{eq:E1(w)}, the first-order sensitivity function $\mathcal{S}_1(\omega)$ for the closed-loop reset control system is defined as 
\begin{equation}
\label{eq: S1}
\mathcal{S}_1(\omega)= \frac{E^1(\omega)}{R^1(\omega)} = \frac{1}{1+\mathcal{L}_{o}(\omega)},
\end{equation}
where
\begin{equation}
\label{eq:Ll, L_rho2}
    \begin{aligned}
	\mathcal{L}_{o}(n\omega) &= \mathcal{L}_{l}(n\omega)+\Gamma(\omega)\mathcal{L}_{\rho}(n\omega) =  \mathcal{L}_{n}(\omega)+(\Gamma(\omega)-1)\mathcal{L}_{\rho}(n\omega).        
    \end{aligned}
\end{equation}

The subsequent content focuses on deriving the higher-order sensitivity function \(\mathcal{S}_n(\omega)\) for \(n > 1\) for the closed-loop reset control system.

From \eqref{eq:z1n_def}, $Z^{1n}(\omega)= \mathscr{F}[z^{1n}(t)]$ is expressed as
\begin{equation}
\label{eq: W^1n(w)}
    Z^{1n}(\omega) = |\mathcal{C}_1(\omega)\mathcal{S}_1(\omega)|R^n(\omega)e^{jn(\angle \mathcal{C}_1(\omega)+\angle \mathcal{S}_1(\omega))}.
\end{equation}
From the block diagram in Fig. \ref{fig: CL_Block_Diagram_Harmonics_b}, the $n$th order harmonic $Z^n(\omega)$ is given by
\begin{equation}
\label{eq: W^n(w)1}
Z^n(\omega) = -Z^n(\omega)\mathcal{L}_{l}(n\omega) - \Gamma(\omega)Z^{1n}(\omega)\mathcal{L}_{\rho}(n\omega).
\end{equation}
Substituting $Z^{1n}(\omega)$ from \eqref{eq: W^1n(w)} into \eqref{eq: W^n(w)1}, we have
\begin{equation}
\begin{aligned}
\label{eq: W^n(w)2}
    Z^n(\omega) = -Z^n&(\omega)\mathcal{L}_{l}(n\omega) - \Gamma(\omega)|\mathcal{C}_1(\omega)\mathcal{S}_1(\omega)|
    R^n(\omega)e^{jn(\angle \mathcal{C}_1(\omega)+\angle \mathcal{S}_1(\omega))}\mathcal{L}_{\rho}(n\omega).
\end{aligned}   
\end{equation}
From \eqref{eq: W^n(w)2}, we obtain:
\begin{equation}
\label{C_wn}
\begin{aligned}
\frac{Z^n(\omega)}{R^n(\omega)}= -\mathcal{S}_{l}(n\omega)\cdot \Gamma(\omega)|\mathcal{C}_1(\omega)\mathcal{S}_1(\omega)|e^{jn(\angle \mathcal{C}_1(\omega)+\angle \mathcal{S}_1(\omega))}\mathcal{L}_{\rho}(n\omega),
\end{aligned} 
\end{equation}
where $\mathcal{S}_{l}(n\omega)$ denotes the sensitivity function of the BLS, given by
\begin{equation}
    \mathcal{S}_{l}(n\omega) =  \frac{1}{1+\mathcal{L}_{l}(n\omega)}.
\end{equation}
From \eqref{eq: wn(t)1}, the relationship between $Z^n(\omega)$ and ${E}^n(\omega)$ is given by
\begin{equation}
\label{eq:Zn, En, C1}
Z^n(\omega) = {E}^n(\omega)\mathcal{C}_1(n\omega).
\end{equation}
From \eqref{eq: H_hn}, \eqref{C_wn}, and \eqref{eq:Zn, En, C1}, the $n$th order (for $n>1$) harmonic in the sensitivity function for the closed-loop reset control system is given by
\begin{equation}
\label{eq:Sn1}
\begin{aligned}
   \mathcal{S}_n(\omega) &= \frac{{E}^n(\omega)}{R^n(\omega)} = \frac{Z^n(\omega)}{R^n(\omega)\mathcal{C}_1(n\omega)}
   = - \mathcal{S}_{l}(n\omega) \cdot|\mathcal{S}_1(\omega)|e^{jn\angle \mathcal{S}_1(\omega)} \cdot\Gamma(\omega)\mathcal{L}_n(\omega)\mathcal{C}_4(n\omega).
\end{aligned}    
\end{equation}
The $n$th order complementary sensitivity function $\mathcal{T}_n(\omega)$ and the control sensitivity function $\mathcal{CS}_n(\omega)$ can be derived through a same procedure as $\mathcal{S}_n(\omega)$ from \eqref{eq:Y1(w)} to \eqref{eq:Sn1}. Here, we concludes the proof.
\end{proof}

\end{document}